\pgfplotsset{compat=1.17}
\newcommand{\eg}{\emph{e.g.}\xspace}
\newcommand{\ie}{\emph{i.e.}\xspace}
\newcommand{\nats}{\mathbb{N}}
\newcommand{\podd}{\ensuremath{\square}\xspace}
\newcommand{\peven}{\ensuremath{\Diamond}\xspace}
\newcommand{\player}{\ensuremath{\alpha}\xspace}
\newcommand{\edgerel}{\ensuremath{E}}
\newcommand{\otherplayer}{\ensuremath{\bar\alpha}\xspace}
\newcommand{\priof}{\ensuremath{p}\xspace}
\newcommand{\pre}[3][]{\ensuremath{\textsf{pre}(#2,#3)}}
\newcommand{\cpre}[4][]{\ensuremath{\textsf{cpre}_{#2}(#3,#4)}}
\newcommand{\spre}[4][]{\ensuremath{\textsf{spre}_{#2}(#3,#4)}}
\newcommand{\sinks}[1]{\textsf{sinks}(#1)}
\newcommand{\psinks}[2]{\textsf{sinks}_{#1}(#2)}
\newcommand{\safe}[2]{\textsf{safe}_{#1}(#2)}
\newcommand{\attr}[4][]{\ensuremath{\textsf{Attr}_{#2}(#3,#4)}}
\newcommand{\sattr}[4][]{\ensuremath{\textsf{SAttr}_{#2}(#3,#4)}}
\newcommand{\solcycle}[2]{\mathcal{C}^{#1}_{\mathsf{sol}}(#2)}
\newcommand{\allcycle}[2]{\mathcal{C}^{#1}_{\mathsf{for}}(#2)}
\newcommand{\sallcycle}[2]{\mathcal{C}^{#1}_{\mathsf{s-for}}(#2)}
\newcommand{\Mattr}[4]{\ensuremath{\textsf{MAttr}_{#1}(#2,#3,#4)}}
\newcommand{\sMattr}[4]{\ensuremath{\textsf{sMAttr}_{#1}(#2,#3,#4)}}
\newcommand{\Mpre}[5]{\ensuremath{\textsf{Mcpre}_{#1}(#2,#3,#4,#5)}}
\newcommand{\sMpre}[5]{\ensuremath{\textsf{sMcpre}_{#1}(#2,#3,#4,#5)}}
\newcommand{\fattr}[3]{\mathcal{F}^{#1}(#2,#3)}
\newcommand{\sfattr}[3]{\mathcal{F}^{#1}_{\mathsf{s}}(#2,#3)}
\newcommand{\timeout}{\ensuremath{\ddagger}}
\title{On-The-Fly Solving for Symbolic Parity Games}
\author{Maurice~Laveaux\inst{1} (\Letter)
\and Wieger~Wesselink\inst{1}
\and Tim~A.C.~Willemse\inst{1,2}}
\institute{Eindhoven University of Technology, The Netherlands \and
ESI (TNO), Eindhoven, The Netherlands \\
\email{\{m.laveaux, j.w.wesselink, t.a.c.willemse\}@tue.nl}}
\begin{document}

\maketitle

\begin{abstract}
  Parity games can be used to represent many different kinds of decision problems.
  In practice, tools that use parity games often rely on a specification in a higher-order logic from which the actual game can be obtained by means of an exploration.
  For many of these decision problems we are only interested in the solution for a designated vertex in the game.
  We formalise how to use on-the-fly solving techniques during the exploration process, and show that this can help to decide the winner of such a designated vertex in an incomplete game.
  Furthermore, we define partial solving techniques for incomplete parity games and show how these can be made resilient to work directly on the incomplete game, rather than on a set of \emph{safe} vertices.
  We implement our techniques for symbolic parity games and study their effectiveness in practice, showing that speed-ups of several orders of magnitude are feasible and overhead (if unavoidable) is typically low.
\end{abstract}

\section{Introduction}
A parity game is a two-player game with an $\omega$-regular winning condition, played by players $\peven$ (`even') and $\podd$ (`odd') on a directed graph. 
The true complexity of solving parity games is still a major open problem, with the most recent breakthroughs yielding algorithms running in quasi-polynomial time, see,~\eg,~\cite{JurdzinskiL17,CaludeJKL017}.
Apart from their intriguing status, parity games pop up in various fundamental results in computer science (\eg, in the proof of decidability of a monadic second-order theory).
In practice, parity games provide an elegant, uniform framework to encode many relevant decision problems, which include model checking problems, synthesis problems and behavioural equivalence checking problems.

Often, a decision problem that is encoded as a parity game, can be answered by determining which of the two players wins a designated vertex in the game graph.
Depending on the characteristics of the game, it may be the case that only a fraction of the game is relevant for deciding which player wins a vertex.
For instance, deciding whether a transition system satisfies an invariant can be encoded by a simple, solitaire (\ie, single player) parity game.
In such a game, player $\podd$ wins all vertices that are sinks (\ie, have no successors), and all states leading to such sinks, so checking whether sinks are reachable from a designated vertex suffices to determine whether this vertex is won by $\podd$, too.
Clearly, as soon as a sink is detected, any further inspection of the game becomes irrelevant.

A complicating factor is that in practice, the parity games that encode decision problems are not given explicitly.
Rather, they are specified in some higher-order logic such as a parameterised Boolean equation system, see, \eg~\cite{CranenLW13}.
Exploring the parity game from such a higher-order specification is, in general, time-and memory-consuming.
To counter this, symbolic exploration techniques have been proposed, see~\eg~\cite{KantP12}.
These explore the game graph on-the-fly and exploit efficient symbolic data structures such as LDDs~\cite{DijkP17} to represent sets of vertices and edges.
Many parity game solving algorithms can be implemented quite effectively using such data structures~\cite{KantP14,SanchezWW18,StasioMV18}, so that in the end, exploring the game graph often remains the bottleneck.

In this paper, we study how to combine the exploration of a parity game and the on-the-fly solving of the explored part, with the aim to speed-up the overall solving process.
The central problem when performing on-the-fly solving during the exploration phase is that we have to deal with incomplete information when determining the winner for a designated vertex.
Moreover, in the symbolic setting, the exploration order may be unpredictable when advanced strategies such as \emph{chaining} and \emph{saturation}~\cite{CiardoMS06} are used.

To formally reason about all possible exploration strategies and the artefacts they generate, we introduce the concept of an \emph{incomplete parity game}, and an ordering on these.
Incomplete parity games are parity games where for some vertices not all outgoing edges are necessarily known.
In practice, these could be identified by, \eg, the \emph{todo} queue in a classical breadth-first search.
The extra information captured by an incomplete parity game allows us to characterise the \emph{safe} set for a given player $\player$.
This is a set of vertices for which it can be established that if player $\player$ wins the vertex, then she cannot lose the vertex if more information becomes available.
We prove an optimality result for safe sets, which, informally, states that a safe set for player $\player$ is also the largest set with this property (see Theorem~\ref{th:optimality}).

The vertices won by player $\player$ in an $\player$-safe set can be determined using a standard parity game solving algorithm such as, \eg, 
Zielonka's recursive algorithm~\cite{Zielonka98} or 
Priority Promotion~\cite{BenerecettiDM18}.
However,  these algorithms may be less efficient as on-the-fly solvers.
For this reason, we study three symbolic \emph{partial} solvers: \emph{solitaire winning cycle} detection, \emph{forced winning cycle} detection and \emph{fatal attractors}~\cite{HuthKP13}.
In particular cases, first determining the safe set for a player and only subsequently solving the game using one of these partial solvers will incur an additional overhead.
As a final result, we therefore prove that all these solvers can be (modified to) run on the incomplete game as a whole, rather than on the safe set of a player (see Propositions~\ref{prop:solcycle}-\ref{prop:fatal}).

As a proof of concept, we have implemented an (open source) symbolic tool for the mCRL2 toolset~\cite{BunteGKLNVWWW19}, that explores a parity game specified by a parameterised Boolean equation system and solves these games on-the-fly.
We report on the effectiveness of our implementation on typical parity games stemming from, \eg, model checking and equivalence checking problems, showing that it can speed up the process with several orders of magnitude, while adding low overhead if the entire game is needed for solving.

\paragraph{Related Work.}
Our work is related to existing techniques for solving symbolic parity games such as~\cite{KantP14,KantP12}, as we extend these existing methods with on-the-fly solving.
Naturally, our work is also related to existing work for on-the-fly model checking.
This includes work for on-the-fly (explicit) model checking of regular alternation-free modal mu-calculus formulas~\cite{MateescuS03} and work for on-the-fly symbolic model checking of RCTL~\cite{BeerBL98}.
Compared to these our method is more general as it can be applied to the full modal mu-calculus (with data), which subsumes RCTL and the alternation-free subset.
Optimisations such as the observation that checking LTL formulas of type \textsf{AG} reduces to reachability checks~\cite{EirikssonM95} are a special case of our methods and partial solvers.
Furthermore, our methods are not restricted to model checking problems only and can be applied to any parity game, including decision problems such as equivalence checking~\cite{ChenPPW07}.
Furthermore, our method is agnostic to the exploration strategy employed.

\paragraph{Structure of the paper.}
In Section~\ref{section:preliminaries} we recall parity games.
In Section~\ref{section:incomplete_games} we introduce incomplete parity games and show how partial solving can be applied correctly.
In Section~\ref{sec:partial_solvers} we present several partial solvers that we employ for on-the-fly solving.
Finally, in Section~\ref{section:experiments} we discuss the implementation of these techniques and apply them to several practical examples.

\section{Preliminaries}\label{section:preliminaries}

A parity game is an infinite-duration, two-player game that is played on a finite directed graph. 
The objective of the two players, called \emph{even} (denoted by \peven) and \emph{odd} (denoted by \podd), is to win vertices in the graph.
\begin{definition}
	\label{def:parity game}
	A \emph{parity game} is a directed graph $G = (V,\edgerel,\priof,(V_\peven,V_\podd))$, where
	\begin{itemize}
		\item $V$ is a finite set of vertices, partitioned in sets $V_\peven$ and $V_\podd$ of vertices owned by \peven and \podd, respectively;
		\item $\edgerel \subseteq V \times V$ is the edge relation;
		\item $\priof: V \to \nats$ is a function that assigns a \emph{priority} to each node.
	\end{itemize}
\end{definition}
Henceforth, let $G = (V, \edgerel, \priof, (V_\peven,V_\podd))$ be an arbitrary parity game.
Throughout this paper, we use \player to denote an arbitrary player and \otherplayer denotes the opponent.
We write $v E$ to denote the set of successors $\{w \in V \mid (v,w) \in E \}$ of vertex $v$.
The set $\sinks{G}$ is defined as the largest set $U \subseteq V$ satisfying for all $v \in U$ that $v E = \emptyset$; \ie, $\sinks{G}$ is the set of all sinks: vertices without successors.
If we are only concerned with the sinks of player $\player$, we write $\psinks{\player}{G}$; \ie, $\psinks{\player}{G} = V_\player \cap \sinks{G}$.
We write $G \cap U$, for $U \subseteq V$, to denote the subgame $(U, (U \times U) \cap E, \priof\!\!\restriction_U, (V_\peven \cap U, V_\podd \cap U))$, where $\priof\!\!\restriction_U(v) = \priof(v)$ for all vertices $v \in U$.

\begin{figure}[h]
\centering
\scriptsize
	\begin{tikzpicture}[->,>=stealth',shorten >=0pt,auto,node distance=2.0cm,semithick]
	\tikzstyle{even}=[draw,inner sep=2pt,diamond]
	\tikzstyle{odd}=[draw,inner sep=2pt,regular polygon,regular polygon sides=4]		
	\node[even,label=below:$u_0$] (s1)  {2};
	\node[odd,label=below:$u_1$,left of=s1] (s0)  {3};
	\node[even,label=below:$u_2$,right of=s1] (s2) {0};
	\node[odd,label=below:$u_3$,right of=s2]  (s3)  {1};
	\node[even,label=below:$u_4$,right of=s3] (s4) {2};
	\path
        (s1) edge (s0)
	(s1) edge[bend left=28] (s2)
	(s2) edge[bend left=28] (s1)
	(s2) edge[bend left=38] (s4)
	(s3) edge               (s2)
	(s3) edge[bend left=28] (s4)
	(s4) edge[bend left=28] (s3)
	;
	\end{tikzpicture}
  \caption{An example parity game}\label{fig:parity_game}
\end{figure}
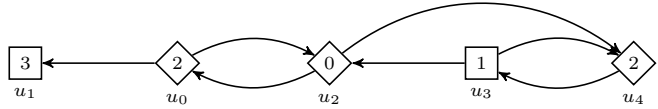
\begin{example}
	Consider the graph depicted in Figure~\ref{fig:parity_game}, representing a parity game.
        Diamond-shaped vertices are owned by player~\peven, whereas box-shaped vertices are owned by player~\podd.
	The priority of a vertex is written inside the vertex.
        Vertex $u_1$ is a sink owned by player $\podd$.\qed
\end{example}

\paragraph{Plays and strategies.}
The game is played as follows. 
Initially, a token is placed on a vertex of the graph.
The owner of a vertex on which the token resides gets to decide the successor vertex (if any) that the token is moved to next.
A maximal sequence of vertices (\ie, an infinite sequence or a finite sequence ending in a sink) visited by the token by following this simple rule is called a \emph{play}.
A finite play $\pi$ is won by player \peven if the sink in which it ends is owned by player \podd, and it is won by player \podd if the sink is owned by player \peven.
An infinite play $\pi$ is won by player \peven if the minimal priority that occurs infinitely often along $\pi$ is even, and it is won by player \podd otherwise.

A strategy $\sigma_\player: V^*V_\player \to V$ for player \player is a partial function that prescribes where player \player moves the token next, given a sequence of vertices visited by the token.
A play $v_0\,v_1\,\dots$ is \emph{consistent} with a strategy $\sigma$ if and only if $\sigma(v_0 \dots v_i) = v_{i+1}$ for all $i$ for which $\sigma(v_0 \dots v_i)$ is defined.
Strategy $\sigma_\player$ is winning for player $\player$ in vertex $v$ if all plays consistent with $\sigma_\player$ and starting in $v$ are won by $\player$.
Player \player wins vertex $v$ if and only if she has a winning strategy $\sigma_\player$ for vertex $v$.
The \emph{parity game solving problem} asks to compute the set of vertices $W_\peven$, won by player \peven and the set $W_\podd$, won by player $\podd$.
Note that since parity games are \emph{determined}~\cite{Zielonka98,McNaughton93}, every vertex is won by one of the two players.
That is, the sets $W_\peven$ and $W_\podd$ partition the set $V$.
\begin{example}
        Consider the parity game depicted in Figure~\ref{fig:parity_game}.
	In this game, the strategy $\sigma_\peven$, partially defined as $\sigma_\peven(\pi u_0) = u_2$ and $\sigma_\peven(\pi u_2) = u_0$, for arbitrary $\pi$, is winning for player~\peven in $u_0$ and $u_2$.
	Player~\podd wins vertex $u_3$ using strategy $\sigma_\podd(\pi u_3) = u_4$, for arbitrary $\pi$.
	Note that player~\peven is always forced to move the token from $u_4$ to $u_3$.
        Vertex $u_1$ is a sink, owned by player $\podd$, and hence, won by player $\peven$. \qed
\end{example}

\paragraph{Dominions.}

A strategy $\sigma_\player$ is said to be \emph{closed} on a set of vertices $U \subseteq V$ iff every play, consistent with $\sigma_\player$ and starting in a vertex $v \in U$ remains in $U$.
If player $\player$ has a strategy that is closed on $U$, we say that the set $U$ is $\player$-closed.
A \emph{dominion} for player $\player$ is a set of vertices $U \subseteq V$ such that player $\player$ has a strategy $\sigma_\player$ that is closed on $U$ and which is winning for $\player$.
Note that the sets $W_{\peven}$ and $W_{\podd}$ are dominions for player $\peven$ and player~$\podd$, respectively, and, hence, every vertex won by player $\player$ must belong to an $\player$-dominion.
\begin{example}
Reconsider the parity game of Figure~\ref{fig:parity_game}.
Observe that player~\podd has a closed strategy on $\{u_3,u_4\}$, which is also winning for player~$\podd$.
Hence, the set $\{u_3, u_4\}$ is an \podd-dominion.
Furthermore, the set $\{u_2, u_3, u_4\}$ is \peven-closed.
However, none of the strategies for which $\{u_2,u_3,u_4\}$ is closed for player $\peven$ is winning for her; therefore $\{u_2, u_3, u_4\}$ is not an $\peven$-dominion.\qed
\end{example}

\paragraph{Predecessors, control predecessors and attractors.}
Let $U \subseteq V$ be a set of vertices. We write $\pre{G}{U}$ to denote the set of predecessors $\{v \in V \mid \exists u \in U: u \in vE \}$ of $U$ in $G$.
The control predecessor set of $U$ for player $\player$ in $G$, denoted
$\cpre{\player}{G}{U}$, contains those vertices for which $\player$ is able to \emph{force} entering $U$ in one step.
It is defined as follows:
\begin{equation*}
\cpre{\player}{G}{U} = (V_{\player} \cap \pre{G}{U}) \cup (V_{\otherplayer} \setminus (\pre{G}{V \setminus U} \cup \sinks{G}))
\end{equation*}
Note that both $\textsf{pre}$ and $\textsf{cpre}$ are monotone operators on the complete lattice $(2^V, \subseteq)$.
The $\player$-attractor to $U$ in $G$, denoted $\attr{\player}{G}{U}$, is the set of vertices from which player $\player$ can force play to reach a vertex in $U$:
\begin{equation*}
\attr{\player}{G}{U} = \mu Z.(U \cup \cpre{\player}{G}{Z})
\end{equation*}
The $\player$-attractor to $U$ can be computed by means of a fixed point iteration, starting at $U$ and adding $\player$-control predecessors in each iteration until a stable set is reached.
We note that the $\player$-attractor to an $\player$-dominion $D$ is again an $\player$-dominion.
\begin{example}
Consider the parity game $G$ of Figure~\ref{fig:parity_game} once again.
The $\peven$-control predecessors of $\{u_2\}$ is the set $\{u_0\}$.
Note that since player $\podd$ can avoid moving to $u_2$ from vertex $u_3$ by moving to vertex $u_4$, vertex $u_3$ is not among the $\peven$-control predecessors of $\{u_2\}$.
The $\peven$-attractor to $\{u_2\}$ is the set $\{u_0,u_2\}$, which is the largest set of vertices for which player $\peven$ has a strategy to force play to the set of vertices $\{u_2\}$.\qed
\end{example}

\section{Incomplete Parity Games}\label{section:incomplete_games}

In many practical applications that rely on parity game solving, the parity game is gradually constructed by means of an exploration, often starting from an `initial' vertex.
This is, for instance, the case when using parity games in the context of  model checking or when deciding behavioural preorders or equivalences.
For such applications, it may be profitable to combine exploration and solving, so that the costly exploration can be terminated when the winner of a particular vertex of interest (often the initial vertex) has been determined.
The example below, however, illustrates that one cannot naively solve the parity game constructed so far.

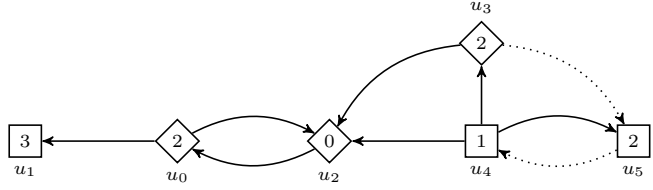
\begin{figure}[h]
\centering
\scriptsize
\begin{tikzpicture}[->,>=stealth',shorten >=0pt,auto,node distance=2.0cm,semithick]
\tikzstyle{even}=[draw,inner sep=2pt,diamond]
\tikzstyle{odd}=[draw,inner sep=2pt,regular polygon,regular polygon sides=4]	
\node[even,label=below:$u_0$] (s1)  {2};
\node[odd,label=below:$u_1$,left of=s1] (s2) {3};
\node[even,label=below:$u_2$,right of=s1] (s2) {0};
\node[odd,label=below:$u_4$,right of=s2]  (s4)  {1};
\node[odd,label=below:$u_5$,right of=s4]  (s5)  {2};
\node[even,label=above:$u_3$,above of=s4,yshift=-20pt]  (s3)  {2};
\path
        (s1) edge (s0)
	(s1) edge[bend left=28] (s2)
	(s2) edge[bend left=28] (s1)
 	(s3) edge[bend right=28] (s2)
 	(s3) edge[dotted,bend left=28] (s5)
 	(s4) edge (s3)
 	(s4) edge (s2)
 	(s4) edge[bend left=28] (s5)
 	(s5) edge[dotted,bend left=28] (s4)
;
\end{tikzpicture}
\caption{A parity game where the dotted edges are not yet known.}\label{fig:partial_parity_game}
\end{figure}

\begin{example}\label{example:incomplete_game}
	Consider the parity game $G$ in Figure~\ref{fig:partial_parity_game}, consisting of all vertices and only the solid edges.
  This game could, for example, be the result of an exploration starting from $u_4$.
  Then $G \cap \{u_0, u_1, u_2, u_3, u_4, u_5\}$ is a subgame for which we can conclude that all vertices form an \peven-dominion.
  However, after exploring the dotted edges, player $\podd$ can escape to vertex $u_4$ from vertex $u_5$.
  Consequently, vertices $u_4$ and $u_5$ are no longer won by player $\peven$ in the extended game.
  Furthermore, observe that the additional edge from $u_3$ to $u_5$ does not affect the previously established fact that player $\peven$ wins this vertex. \qed
\end{example}
To facilitate reasoning about games with incomplete information, we first introduce the notion of an \emph{incomplete} parity game.

\begin{definition}
  An \emph{incomplete} parity game is a structure $\Game = (G, I)$, where $G$ is a parity game $(V, \edgerel, \priof, (V_\peven, V_\podd))$, and $I \subseteq V$ is a set of vertices with potentially unexplored successors. We refer to the set $I$ as the set of \emph{incomplete vertices}; the set  $V \setminus I$ is the set of \emph{complete vertices}.
\end{definition}

Observe that $(G,\emptyset)$ is a `standard' parity game.
We permit ourselves to use the notation for parity game notions such as plays, strategies, dominions, \emph{etcetera} also in the context of incomplete parity games.
In particular, for $\Game = (G,I)$, we will write $\pre{\Game}{U}$ and $\attr{\player}{\Game}{U}$ to indicate $\pre{G}{U}$ and $\attr{\player}{G}{U}$, respectively.
Furthermore, we define $\Game \cap U$ as the structure $(G \cap U, I \cap U)$.

Intuitively, while exploring a parity game, we extend the set of vertices and edges by exploring the incomplete vertices.
Doing so gives rise to potentially new incomplete vertices.
At each stage in the exploration, the incomplete parity game extends incomplete parity games explored in earlier stages. 
We formalise the relation between incomplete parity games, abstracting from any particular order in which vertices and edges are explored.
\begin{definition}
Let $\Game = ((V, \edgerel, \priof, (V_\peven, V_\podd)), I)$, $\Game' = ((V', \edgerel', \priof', (V'_\peven, V'_\podd)), I')$ be incomplete parity games.
We write $\Game \sqsubseteq \Game'$ iff the following conditions hold:
\begin{enumerate}[label=(\arabic*)]
  \item\label{cond:vertices} $V \subseteq V'$, $V_\peven \subseteq V'_\peven$ and $V_\podd \subseteq V'_\podd$;

  \item\label{cond:edges} $\edgerel \subseteq \edgerel'$ and $((V \setminus I) \times V) \cap \edgerel' \subseteq \edgerel$;
  
  \item\label{cond:prio} $\priof = \priof'\!\!\restriction_{V}$;
  
  \item\label{cond:incomplete} $I' \cap V \subseteq I$
\end{enumerate}
\end{definition}
Conditions~\ref{cond:vertices} and~\ref{cond:prio} are self-explanatory.
Condition~\ref{cond:edges} states that on the one hand, no edges are lost, and, on the other hand, $\edgerel'$ can only add edges from vertices that are incomplete: for complete vertices, $\edgerel'$ specifies no new successors.
Finally, condition~\ref{cond:incomplete} captures that the set of incomplete vertices $I'$ cannot contain vertices that were previously complete.
We note that the ordering $\sqsubseteq$ is reflexive, anti-symmetric and transitive.
\begin{example}\label{example:ordering}
Suppose that $\Game = (G,I)$ is the incomplete parity game depicted in Figure~\ref{fig:partial_parity_game}, where $G$ is the game with all vertices and only the solid edges, and $I = \{u_3,u_5\}$.
Then $\Game \sqsubseteq \Game'$, where $\Game' = (G',I')$ is the incomplete parity game where $G'$ is the depicted game with all vertices and both the solid edges and dotted edges, and $I' = \emptyset$.\qed
\end{example}
Let us briefly return to Example~\ref{example:incomplete_game}.
We concluded that the winner of vertex $u_4$ (and also $u_5$) changed when adding new information.
The reason is that player $\podd$ has a strategy to reach an \emph{incomplete} vertex owned by her.
Such an incomplete vertex may present an opportunity to escape from plays that would be non-winning otherwise.
On the other hand, the incomplete vertex $u_3$ has already been sufficiently explored to allow for concluding that this vertex is won by player $\peven$, even if more successors are added to $u_3$.
This suggests that for some subset of vertices, we can decide their winner in an incomplete parity game and preserve that winner in all future extensions of the game.
We formally characterise this set of vertices in the definition below.
\begin{definition}
Let $\Game = ( G,I)$, with $G =(V, \edgerel, \priof, (V_\peven, V_\podd))$ be an incomplete parity game.
The $\player$-\emph{safe} vertices for $\Game$, denoted by $\safe{\player}{\Game}$, is the set $V \setminus \attr{\otherplayer}{G}{V_{\otherplayer} \cap I}$.
\end{definition}
\begin{example}\label{example:safe}
Consider the incomplete parity game $\Game$ of Example~\ref{example:ordering} once more.
We have $\safe{\peven}{\Game} = \{u_0,u_1,u_2,u_3\}$ and $\safe{\podd}{\Game} = \{u_0,u_1,u_2,u_4,u_5\}$.
\qed
\end{example}

In the remainder of this section, we show that it is indeed the case that while exploring a parity game, one can only safely determine the winners in the sets $\safe{\podd}{\Game}$ and $\safe{\peven}{\Game}$, respectively.
More specifically, we show (Lemma~\ref{lemma:dominion_preserved}) that all $\player$-dominions found in $\safe{\player}{\Game}$ are preserved in extensions of the game, and (Lemma~\ref{lemma:not_closed}) the winner of vertices not in $\safe{\player}{\Game}$ are not necessarily won by the same player in extensions of the game.
\begin{lemma}\label{lemma:dominion_preserved}
  Given two incomplete games $\Game$ and $\Game'$ such that $\Game \sqsubseteq \Game'$.
  Any $\player$-dominion in $\Game \cap \safe{\player}{\Game}$ is also an $\player$-dominion in $\Game'$.
\end{lemma}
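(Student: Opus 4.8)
The plan is to take an arbitrary $\player$-dominion $D \subseteq \safe{\player}{\Game}$ in the subgame $\Game \cap \safe{\player}{\Game}$, together with a witnessing strategy $\sigma_\player$ that is closed on $D$ and winning for $\player$ there, and to show that the very same strategy still works in $\Game'$. Concretely, I would first argue that $\sigma_\player$ remains a legal strategy in $\Game'$ and is still closed on $D$, and then argue that every play consistent with it from a vertex of $D$ is won by $\player$. The key structural fact I would lean on is the definition of $\safe{\player}{\Game} = V \setminus \attr{\otherplayer}{G}{V_{\otherplayer} \cap I}$: since $D \subseteq \safe{\player}{\Game}$, no vertex of $D$ lies in $\attr{\otherplayer}{G}{V_{\otherplayer} \cap I}$, and in particular $D$ contains no incomplete vertex owned by $\otherplayer$, and more strongly $\otherplayer$ cannot force play from $D$ into such a vertex.

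The main step is closure. Let me spell out why no play consistent with $\sigma_\player$ can leave $D$ in $\Game'$. Suppose a play from $v \in D$ first leaves $D$ by the move $(u, u')$ with $u \in D$, $u' \notin D$. If $u \in V_\player$, then $\sigma_\player(u) = \sigma_\player$'s prescribed successor lies in $D$ (closure in $\Game$), and since $\sigma_\player$ uses only edges present in $\Game$, this same successor is available and chosen in $\Game'$ — contradiction. If $u \in V_{\otherplayer}$, then $(u,u') \in \edgerel'$. Now either $u$ is complete in $\Game$, in which case condition~\ref{cond:edges} of $\sqsubseteq$ forces $(u,u') \in \edgerel$, so $u'$ was already an escape for $\otherplayer$ in $\Game$, contradicting that $D$ is $\player$-closed there; or $u$ is incomplete in $\Game$, i.e. $u \in I$, but then $u \in V_{\otherplayer} \cap I \subseteq \attr{\otherplayer}{G}{V_{\otherplayer}\cap I}$, contradicting $u \in D \subseteq \safe{\player}{\Game}$. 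Hence $D$ is $\player$-closed in $\Game'$ under $\sigma_\player$. One subtlety to handle carefully here: a vertex of $D$ that is complete in $\Game$ could in principle be listed in $I'$? No — condition~\ref{cond:incomplete} says $I' \cap V \subseteq I$, so completeness is preserved, and this is exactly what makes the first subcase go through.

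It then remains to check the winning condition. Because every $\sigma_\player$-consistent play from $D$ in $\Game'$ stays in $D$, it uses only edges of $G\cap D$, which by condition~\ref{cond:prio} carries the same priority function as in $\Game$; hence such a play is also a $\sigma_\player$-consistent play in $\Game \cap \safe{\player}{\Game}$ (restricted to $D$), where it is won by $\player$ by assumption — for infinite plays the least priority seen infinitely often is unchanged, and for finite plays ending in a sink, the sink's owner is unchanged and such a vertex is genuinely a sink in $\Game'$ too (it lies in $D$, complete, with all its $\Game'$-successors already in $\edgerel$, of which there are none). Therefore $D$ is an $\player$-dominion in $\Game'$. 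I expect the closure argument — specifically the case split on whether the escaping $\otherplayer$-vertex is complete or incomplete, and the bookkeeping that $\sqsubseteq$ never turns a complete vertex incomplete and never adds successors to complete vertices — to be the only place requiring real care; the winning-condition transfer is then essentially immediate since plays are confined to an unchanged region of the graph.
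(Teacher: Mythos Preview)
Your proof takes essentially the same route as the paper's: verify that a closed winning strategy for $D$ in the subgame $\Game \cap \safe{\player}{\Game}$ remains closed on $D$ in $\Game'$ by a case split on the owner of a hypothetical escape vertex, and then note that plays confined to $D$ are identical in both games. Your treatment of the $\otherplayer$-owned case, splitting explicitly into complete versus incomplete, is arguably cleaner than the paper's phrasing of the same step.

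There is one small technical point where the paper is more careful. In the case $u \in V_\player$ you write ``$\sigma_\player$'s prescribed successor lies in $D$'', tacitly assuming $\sigma_\player$ is defined at the history ending in $u$. But strategies in this paper are \emph{partial} functions $V^*V_\player \to V$, and a closed winning strategy may well be undefined at some $\player$-histories (closure only requires that every consistent continuation stays in $D$). At such a $u$, any successor in $\Game'$ yields a consistent play; and if $u \in V_\player \cap I$ (which $\safe{\player}{\Game}$ does \emph{not} exclude) or $u$ has a $G$-successor outside $\safe{\player}{\Game}$ (also possible for $\player$-owned vertices), that play can exit $D$. The paper closes exactly this gap by extending $\sigma$ to a strategy $\sigma'$ that, wherever $\sigma$ is undefined but some $D$-successor exists, picks one. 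Your argument is repaired identically, or by assuming without loss of generality a positional strategy that is total on $V_\player \cap D$.
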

\begin{proof}
	Let $\Game = (G,I)$, with $G = (V,\edgerel,\priof, (V_\peven,V_\podd))$, and
	$\Game' = (G',I')$, with $G' = (V',\edgerel',\priof', (V'_\peven,V'_\podd))$.
	Assume that $\Game \sqsubseteq \Game'$ and suppose that $D \subseteq V$ is an $\player$-dominion in $G \cap \safe{\player}{\Game}$.
	Observe that since $V \subseteq V'$, also $D \subseteq V'$.
	This means that player $\player$ must have a strategy $\sigma$ that is closed on $D$ and for which every play consisting with that strategy and starting in a vertex in $D$ is winning for $\player$.
	Let $\sigma$ be such a strategy.
	We define strategy $\sigma'$ as follows.
	Let $\pi = v_0\dots v_n \in V^*V_\player$ be an arbitrary path through $D$.
	Then $\sigma'(\pi) = \sigma(\pi)$ whenever $\sigma$ is defined for $\pi$, and $\sigma'(\pi) = v'$ for an arbitrary $v' \in \{v'' \in D \mid (v_n,v') \in E\} \subseteq \{v'' \in D \mid (v_n,v') \in E' \}$ in case $\{v'' \in D \mid (v_n,v') \in E\} \not= \emptyset$ and $\sigma$ is not defined for $\pi$.
	
	Consider an arbitrary play $v_0\,v_1\,\dots$ in $G'$, consistent with $\sigma'$ and starting in $v_0 \in D$.
	Suppose that there is some vertex on this play that is not in $D$.
	Let $v_i$ be the first such vertex; \ie, $v_j \in D$ for all $j < i$.
	We distinguish two cases.
	\begin{itemize}
	\item Assume $\sigma$ is defined for $v_0\,v_1\,\dots v_{i-1}$. 
	Then $\sigma(v_0\,v_1,\dots v_{i-1}) \in D$, since $\sigma$ is closed on $D$ in $G$.
	But then also $\sigma'(v_0\,v_1\,\dots v_{i-1}) = v_i \in D$. 
	Contradiction.
	
	\item Next, assume that $\sigma$ is not defined for $v_0\,v_1\,\dots v_{i-1}$.
	We distinguish two further cases.
	\begin{itemize}
	\item Case $v_{i-1} \in V_\player$. 
	Since $\sigma$ is closed on $D$ in $G$, it must be the case that $\{v'' \in V \mid (v_{i-1},v'') \in E\} \subseteq D$. 
	By construction, $\sigma'(v_0\,v_1\,\dots v_{i-1}) = v_i \in D$.
	Contradiction.
	\item Case $v_{i-1} \in V_{\otherplayer}$. 
	Since $\sigma$ is closed on $D$ in $G$, it must be the case that $\{v '' \in V \mid (v_{i-1},v'') \in E\} \subseteq D$.
	Furthermore, since $D \subseteq V \cap \safe{\player}{\Game}$,  $(I' \cap V) \subseteq I$ and $(V \setminus I) \times V) \cap E' \subseteq E$, also $\{v '' \in V' \mid (v_{i-1},v'') \in E'\} \subseteq D$.
	But then also $v_i \in D$.
	Contradiction.
	
	\end{itemize}
	\end{itemize}
	Since all cases lead to a contradiction, we find that $v_i \in D$.
	But then all plays consistent with $\sigma'$ must remain in $D$.
	
	It remains to argue that $\sigma'$ is a winning play.
	However, this follows from the fact that all plays $\pi$, starting in $D$ and consistent with $\sigma'$ are also plays consistent with $\sigma$.
	Since $\sigma$ was a winning strategy for player $\player$, also $\sigma'$ is winning for $\player$ in $(G',I')$.
	\qed
\end{proof}

\begin{example}
Recall that in Example~\ref{example:safe}, we found that $\safe{\peven}{\Game} = \{u_0,u_1,u_2,u_3\}$.
Observe that in the incomplete parity game $\Game$ of Example~\ref{example:ordering}, restricted to vertices $\{u_0,u_1,u_2,u_3\}$, all vertices are won by player $\peven$, and, hence, $\{u_0,u_1,u_2,u_3\}$ is an $\peven$-dominion.
Following Lemma~\ref{lemma:dominion_preserved} we can indeed conclude that this remains an $\peven$-dominion in all extensions of $\Game$, and, in particular, for the (complete) parity game $\Game'$ of Example~\ref{example:ordering}. \qed
\end{example}

\begin{lemma}\label{lemma:not_closed}
  Let $\Game$ be an incomplete parity game.
  Suppose that $W$ is an $\player$-dominion in $\Game$.
  If $W \not\subseteq \safe{\player}{\Game}$, then there is an (incomplete) parity game $\Game'$ such that $\Game \sqsubseteq \Game'$
 and all vertices in $W \setminus \safe{\player}{\Game}$ are won by $\otherplayer$.
\end{lemma}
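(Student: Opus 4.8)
The plan is to construct an explicit extension $\Game'$ of $\Game$ in which player $\otherplayer$ is handed an escape out of every unsafe vertex. The first step is to rewrite the unsafe set in attractor form: by definition of $\safe{\player}{\Game}$ we have $V \setminus \safe{\player}{\Game} = \attr{\otherplayer}{G}{V_{\otherplayer} \cap I}$, so writing $A := \attr{\otherplayer}{G}{V_{\otherplayer} \cap I}$ gives $W \setminus \safe{\player}{\Game} = W \cap A \subseteq A$. Note also that $V_{\otherplayer} \cap I \neq \emptyset$, since otherwise $A = \emptyset$, contradicting $W \not\subseteq \safe{\player}{\Game}$. It therefore suffices to produce a (finite) incomplete parity game $\Game'$ with $\Game \sqsubseteq \Game'$ in which every vertex of $A$ is won by $\otherplayer$.

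For $\Game'$ I would take a fresh vertex $s \notin V$, let $s$ be owned by player $\player$, give it an arbitrary priority (say $0$) and no outgoing edges, and add an edge $(v,s)$ for every $v \in V_{\otherplayer} \cap I$; the set of incomplete vertices is kept as $I' = I$. Checking $\Game \sqsubseteq \Game'$ is then routine: conditions~\ref{cond:vertices}, \ref{cond:prio} and~\ref{cond:incomplete} are immediate from the construction, and for~\ref{cond:edges} the key point is that every newly added edge leaves a vertex of $V_{\otherplayer} \cap I \subseteq I$, so $((V \setminus I) \times V) \cap E' = ((V \setminus I) \times V) \cap E \subseteq E$, while $E \subseteq E'$ holds trivially.

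The heart of the proof is to show that $\otherplayer$ wins every $v \in A$ in $\Game'$. The winning strategy for $\otherplayer$ combines the standard $\otherplayer$-attractor strategy that forces play from any vertex of $A$ into $V_{\otherplayer} \cap I$ — staying inside $A$ and strictly lowering the attractor rank at each move — with the single extra rule ``from a vertex of $V_{\otherplayer} \cap I$, move to $s$''. What makes this argument transfer from $G$ to $\Game'$ is that no vertex of $A \setminus (V_{\otherplayer} \cap I)$ acquires a new outgoing edge in $\Game'$: an $\otherplayer$-owned vertex in that set lies in $V_{\otherplayer}$ but not in $V_{\otherplayer} \cap I$, hence is complete and gets no new successor; no $\player$-owned vertex gets a new successor at all; and $s$ is a sink. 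Consequently $\player$ still cannot leave $A$ from an $\player$-owned vertex of $A$, the attractor argument still guarantees that the play reaches $V_{\otherplayer} \cap I$ after finitely many steps, and one further move sends the token to $s$. Since $s$ is a sink owned by $\player$, this finite play is won by $\otherplayer$; hence the combined strategy is winning for $\otherplayer$ from every vertex of $A$, and in particular from every vertex of $W \setminus \safe{\player}{\Game} \subseteq A$.

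The only genuinely delicate point is this last transfer step: one must verify that enriching the game with the escape edges does not simultaneously give player $\player$ a new way to stay inside $A$ or to dodge the target set $V_{\otherplayer} \cap I$. This is exactly why the escape edges are added only at the $\otherplayer$-owned incomplete vertices — which are precisely the attractor's target set — and why $s$ must be chosen as a sink; once that is observed, the $\otherplayer$-attractor strategy carries over from $G$ to $\Game'$ essentially verbatim, and the lemma follows.
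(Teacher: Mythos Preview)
Your proof is correct and follows essentially the same construction as the paper: add a fresh sink owned by $\player$, give $\otherplayer$ an escape edge into it from the relevant incomplete vertices, and argue that the $\otherplayer$-attractor strategy to $V_{\otherplayer}\cap I$ followed by a single move to the sink is winning for $\otherplayer$. The paper's version differs only cosmetically---it adds the escape edges from \emph{all} of $I$ rather than just $V_{\otherplayer}\cap I$ and sets $I' = \emptyset$---and is in fact less explicit than you are about why the attractor strategy from $G$ still forces play to $V_{\otherplayer}\cap I$ in the enlarged game.
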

\begin{proof}
	Let $\Game = (G,I)$ be an incomplete parity game, with $G = (V,\edgerel,\priof,(V_\peven,V_\podd))$, and assume that $W \subseteq V$ is an $\player$-dominion.
	Suppose that $W \not\subseteq \safe{\player}{\Game}$.
	
	Let $G' = (V',\edgerel',\priof',(V_\peven,V_\podd))$, with $V' = V \cup \{z\}$, for fresh vertex $z \notin V$, $E' = E \cup \{(v,z) \mid v \in I\}$, $\priof'\!\!\restriction_{V} = \priof$ and $\priof'(z) = 0$, and $V'_\player = V_\player \cup \{z\}$ and $V'_{\otherplayer} = V_{\otherplayer}$.
	Then it follows that $\Game \sqsubseteq \Game'$ for $\Game' = (G', \emptyset)$.
	
	Pick a vertex $v \in W \setminus \safe{\player}{\Game}$. 
	Then $v \in \attr{\otherplayer}{G}{V_{\otherplayer}\cap I}$, so player $\otherplayer$ must have a strategy to force play to $V_{\otherplayer}\cap I$.
	Let $\sigma_{\otherplayer}$ be this strategy.
	We define a new strategy $\sigma'_{\otherplayer}$ which, for sequences $v_0\dots v_n$ for which $v_n \notin V_{\otherplayer} \cap I$ is defined as $\sigma_{\otherplayer}$, and for $v_n \in V_{\otherplayer} \cap I$, we have $\sigma'_{\otherplayer}(v_0\dots v_n) = z$.
	Let $\sigma_{\player}$ be an arbitrary strategy for player $\player$.
	Consider a play $\pi$, starting in $v$, which is consistent with both $\sigma_{\player}$ and $\sigma'_{\otherplayer}$.
	Then there must be a vertex $w$ on that play such that $w \in I$, since $\sigma'_{\otherplayer}$ forces play to $V_{\otherplayer} \cap I$.
	But then, by construction, $\pi$ must be finite and end in sink $z \notin W$.
	Note that vertex $z \in V'$ is won by $\otherplayer$, so we find that $\pi$ is won by $\otherplayer$.
	Since $\sigma_{\player}$ is an arbitrary strategy by player $\player$, we can conclude that all plays consistent with $\sigma'_{\otherplayer}$ are won by $\otherplayer$ and therefore vertex $v$ is won by $\otherplayer$, too.
	Hence, all vertices in $W \setminus \safe{\player}{\Game}$ are won by $\otherplayer$ in $\Game'$. \qed
\end{proof}

As a corollary of the above lemma, we find that $\player$-dominions that contain vertices outside of the $\player$-safe set are not guaranteed to be dominions in all extensions of the incomplete parity game.
\begin{corollary}\label{cor:not_closed}
  Let $\Game$ be an incomplete parity game.
  Suppose that $W$ is an $\player$-dominion in $\Game$.
  If $W \not\subseteq \safe{\player}{\Game}$, then there is an (incomplete) parity game $\Game'$ such that $\Game \sqsubseteq \Game'$ and $W$ is \emph{not} an $\player$-dominion in $\Game'$.
\end{corollary}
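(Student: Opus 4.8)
The plan is to derive the corollary directly from Lemma~\ref{lemma:not_closed} together with the determinacy of parity games. Since $W$ is an $\player$-dominion in $\Game$ and $W \not\subseteq \safe{\player}{\Game}$, Lemma~\ref{lemma:not_closed} yields an (incomplete) parity game $\Game'$ with $\Game \sqsubseteq \Game'$ in which every vertex of $W \setminus \safe{\player}{\Game}$ is won by $\otherplayer$. Because $W \not\subseteq \safe{\player}{\Game}$, the set $W \setminus \safe{\player}{\Game}$ is non-empty; fix a vertex $v$ in it. By the lemma, $v$ is won by $\otherplayer$ in $\Game'$.

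Next I would show that this is incompatible with $W$ being an $\player$-dominion in $\Game'$. Note first that by condition~\ref{cond:vertices} of $\sqsubseteq$ we have $W \subseteq V \subseteq V'$, so $W$ is indeed a set of vertices of $\Game'$ and the question of whether it is an $\player$-dominion there is well posed. If $W$ were an $\player$-dominion in $\Game'$, then by definition player $\player$ would possess a strategy that is closed on $W$ and winning for $\player$ from every vertex of $W$; in particular $\player$ would win $v$ in $\Game'$. However, the witness $\Game'$ supplied by Lemma~\ref{lemma:not_closed} is of the form $(G',\emptyset)$, i.e.\ a genuine (complete) parity game, so determinacy applies and every vertex is won by exactly one of the two players. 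Hence $v$ cannot be won by both $\player$ and $\otherplayer$, a contradiction. Therefore $W$ is not an $\player$-dominion in $\Game'$, which is exactly what the corollary claims.

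I do not anticipate a real obstacle: the statement is essentially a one-line consequence of Lemma~\ref{lemma:not_closed} and determinacy. The only subtlety worth spelling out is that the game $\Game'$ built in the proof of Lemma~\ref{lemma:not_closed} has no incomplete vertices, so that invoking determinacy is legitimate; if one wanted to remain within strictly incomplete games one could instead observe that the property ``$v$ is won by $\otherplayer$'' is preserved under further completions, but this refinement is not needed for the statement as formulated.
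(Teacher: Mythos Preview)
Your proposal is correct and matches the paper's intent: the paper states this result as an immediate corollary of Lemma~\ref{lemma:not_closed} without giving a separate proof, and your argument fills in exactly the expected details. One small remark: you do not actually need to invoke determinacy (or inspect the proof of the lemma to see that $\Game'$ is complete), since the mere fact that $\otherplayer$ has a winning strategy from $v$ already rules out $\player$ having one, as the unique play consistent with both strategies would have to be won by both players.
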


The theorem below summarises the two previous results, claiming that the sets $\safe{\peven}{\Game}$ and $\safe{\podd}{\Game}$ are the optimal subsets that can be used safely when combining solving and the exploration of a parity game.

\begin{theorem}\label{th:optimality}
  Let $\Game = (G, I)$, with $G = (V, \edgerel, \priof, (V_\peven,V_\podd))$, be an incomplete parity game.
  Define $W_\player$ as the union of all $\player$-dominions in $\Game \cap \safe{\player}{\Game}$, and let $W_? = V \setminus (W_\peven \cup W_\podd)$.
  Then $W_?$ is the largest set of vertices $v$ for which there are incomplete parity games $\Game^\player$ and $\Game^{\otherplayer}$ such that $\Game \sqsubseteq \Game^\player$ and $\Game \sqsubseteq \Game^{\otherplayer}$ and $v$ is won by $\player$ in $\Game^\player$ and $v$ is won by $\otherplayer$ in $\Game^{\otherplayer}$.
\end{theorem}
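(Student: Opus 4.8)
The plan is to prove two things: (i) every vertex of $W_?$ is genuinely undetermined, in the sense that it is won by \peven in some extension of $\Game$ and by \podd in some other extension; and (ii) every vertex with this property already lies in $W_?$. I would begin with a reformulation used throughout: the incompleteness marker plays no role in the \emph{definition} of a dominion, and in any parity game the union of all $\player$-dominions equals the winning region of $\player$ (that region is itself an $\player$-dominion, and conversely every $\player$-dominion is contained in it). Hence $W_\player$ is exactly the winning region of player $\player$ in the ordinary, determined parity game $G \cap \safe{\player}{\Game}$. In particular $W_\peven$ and $W_\podd$ are disjoint --- if a vertex were in both, Lemma~\ref{lemma:dominion_preserved} with $\Game' = \Game$ would make it won by both players in $\Game$, contradicting determinacy --- so $W_?$ really is the complement of $W_\peven \cup W_\podd$.

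For direction (ii) I would just apply Lemma~\ref{lemma:dominion_preserved}: if $v \in W_\player$ then $v$ lies in an $\player$-dominion $D$ of $\Game \cap \safe{\player}{\Game}$, $D$ remains an $\player$-dominion in every $\Game'$ with $\Game \sqsubseteq \Game'$, so $v$ is won by $\player$ in every extension; since extensions are parity games and hence determined, $v$ is never won by $\otherplayer$ in any extension. Thus the set of ``ambiguous'' vertices is contained in $V \setminus (W_\peven \cup W_\podd) = W_?$.

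For direction (i) it suffices, by the symmetry of the players, to prove: for any player $\player$, if $v \notin W_\player$ then there is an extension $\Game'$ with $\Game \sqsubseteq \Game'$ in which $v$ is won by $\otherplayer$. (Applying this with $\player = \peven$ and with $\player = \podd$ to a vertex $v \in W_?$ gives the two extensions required by the theorem.) Write $S = \safe{\player}{\Game}$, so $V \setminus S = \attr{\otherplayer}{G}{V_\otherplayer \cap I}$. I would take $\Game' = (G', \emptyset)$ where $G'$ adds a fresh sink $z$ \emph{owned by $\player$} --- so $z$ is won by $\otherplayer$ --- together with an edge $u \transition z$ for every $u \in I$; checking $\Game \sqsubseteq \Game'$ is routine, since the only new edges leave incomplete vertices and ownership and priorities are preserved. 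As $W_\player$ is the $\player$-winning region of the determined game $G \cap S$ and $v \notin W_\player$, either $v \notin S$, or $v \in S$ and $v$ is won by $\otherplayer$ in $G \cap S$; fix a strategy $\tau$ for $\otherplayer$ that is closed on, and winning throughout, $\otherplayer$'s winning region of $G \cap S$. In $G'$ let $\otherplayer$ follow $\tau$ as long as the play stays in $S$, and, as soon as the play reaches a vertex of $\attr{\otherplayer}{G}{V_\otherplayer \cap I}$, switch to the $\otherplayer$-attractor strategy towards $V_\otherplayer \cap I$, moving to $z$ upon reaching such a vertex.

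It remains to see that every play from $v$ consistent with this composite strategy is won by $\otherplayer$, and this is where I expect the real work (and the only genuine subtlety) to lie. A play that never leaves $S$ is a $\tau$-consistent play of $G \cap S$ staying in $\otherplayer$'s winning region there --- the only successors a $\player$-vertex can take that keep the play in $S$ are its $G \cap S$-successors, which $\tau$'s closedness confines to that region --- hence it is won by $\otherplayer$. A play can leave $S$ only through a move by $\player$: either to $z$, possible only from a $\player$-owned incomplete vertex, which ends the play at a $\player$-owned sink won by $\otherplayer$; or into $\attr{\otherplayer}{G}{V_\otherplayer \cap I}$, where the defining property of the attractor lets $\otherplayer$ force the play to $V_\otherplayer \cap I$ and then to $z$ --- the new edges to $z$ cannot help $\player$ since $z$ is won by $\otherplayer$, and inside the attractor all successors of $\player$-vertices stay in it, so $\player$ cannot slip back into $S$. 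The case $v \notin S$ is the degenerate one in which $\tau$ is never used, and there the argument is exactly the construction in the proof of Lemma~\ref{lemma:not_closed}. The crux is the $S$/non-$S$ interface: $G \cap S$ deliberately forgets the $G$-edges that leave $S$, so $\tau$ says nothing about plays in which $\player$ exploits such an edge; the whole point of the definition $\safe{\player}{\Game} = V \setminus \attr{\otherplayer}{G}{V_\otherplayer \cap I}$ is that any such escape lands in the $\otherplayer$-attractor to $\otherplayer$'s own incomplete vertices, which is precisely what lets $\otherplayer$ drive the token into the losing sink $z$.
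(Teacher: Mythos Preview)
Your proposal is correct. Both your argument and the paper's use Lemma~\ref{lemma:dominion_preserved} identically for the ``largest'' direction. For the other direction, however, the paper takes a shortcut you do not: it invokes determinacy of $G$ itself to get one of the two required extensions for free --- if $v \in W_?$ is won by $\player$ in $G$, simply take $\Game^\player = \Game$ --- and then argues that necessarily $v \notin \safe{\player}{\Game}$ (because $\safe{\player}{\Game}$, as the complement of an $\otherplayer$-attractor, is an $\otherplayer$-trap, so $\player$'s winning region in $G$ restricted to it coincides with $W_\player$), which reduces the construction of $\Game^{\otherplayer}$ directly to Lemma~\ref{lemma:not_closed}. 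Your route instead proves the symmetric auxiliary statement ``$v \notin W_\player$ implies some extension makes $\otherplayer$ win $v$'' for each player separately; this forces you to handle the case $v \in \safe{\player}{\Game}$ explicitly by splicing together the $\otherplayer$-winning strategy of $G \cap \safe{\player}{\Game}$ with the attractor-to-sink construction of Lemma~\ref{lemma:not_closed}. That is genuinely more work --- the $S$/non-$S$ interface analysis you flag as the crux is precisely what the paper's shortcut sidesteps by establishing $v \notin S$ up front --- but it buys you a slightly stronger and fully symmetric intermediate lemma, and it makes the combined strategy in the extended game completely explicit.
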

\begin{proof}
Let $\Game$, with $G = (V, \edgerel, \priof, (V_\peven, V_\podd))$ be an incomplete parity game.
Pick a vertex $v \in W_?$.
Suppose that in $G$, vertex $v \in W_?$ is won by player $\player$.
Let $\Game^\player = \Game$. 
Then $\Game \sqsubseteq \Game^\player$ and $v$ is also won by $\player$ in $\Game^\player$.

Next, we argue that there must be a game $\Game^{\otherplayer}$ such that $\Game \sqsubseteq \Game^{\otherplayer}$ and $v$ is won by $\otherplayer$ in $\Game^{\otherplayer}$.
Since $v \in W_?$ is won by player $\player$ in $G$, $v$ must belong to an $\player$-dominion in $G$.
Towards a contradiction, assume that $v \in \safe{\player}{\Game}$. 
Then there must also be a $\player$-dominion containing $v$ in $G\cap \safe{\player}{\Game}$, since $\otherplayer$ cannot escape the set $\safe{\player}{\Game}$. 
But then $v \in W_\player$.
Contradiction, so $v \notin \safe{\player}{\Game}$.
So, $v$ must be part of an $\player$-dominion $D$ in $G$ such that $D \not\subseteq \safe{\player}{\Game}$.
By Lemma~\ref{lemma:not_closed}, we find that there is an incomplete parity game $\Game^{\otherplayer}$ such that $\Game \sqsubseteq \Game^{\otherplayer}$ and all vertices in $D \setminus \safe{\player}{\Game}$, and vertex $v \in D$ in particular, are won by $\otherplayer$ in $\Game^{\otherplayer}$.

Finally, we argue that $W_?$ cannot be larger.
Pick a vertex $v \notin W_?$.
Then there must be some player $\player$ such that $v \in W_\player$, and, consequently, there must be an $\player$-dominion $D \subseteq \Game \cap \safe{\player}{\Game}$ such that $v \in D$.
But then by Lemma~\ref{lemma:dominion_preserved}, we find that $v$ is won by $\player$ in all incomplete parity games $\Game'$ such that $\Game \sqsubseteq \Game'$.\qed
\end{proof}

\section{On-the-fly Solving}\label{sec:partial_solvers}

In the previous section we saw that for any solver $\textsf{solve}_\player$, which accepts a parity game as input and returns an $\player$-dominion $W_\player$, a correct on-the-fly solving algorithm can be obtained by computing $W_\player = \textsf{solve}_\player(\Game \cap \safe{\player}{\Game})$ while exploring an (incomplete) parity game $\Game$.
While this approach is clearly sound, computing the set of safe vertices can be expensive for large state spaces and potentially wasteful when no dominions are found afterwards.
We next introduce \emph{safe attractors} which, we show, can be used to search for specific dominions without first computing the $\player$-safe set of vertices.

\subsection{Safe Attractors}

We start by observing that the $\player$-attractor to a set $U$ in an incomplete parity game $\Game$ does not make a distinction between the set of complete and incomplete vertices.
Consequently, it may wrongly conclude that $\player$ has a strategy to force play to $U$ when the attractor strategy involves incomplete vertices owned by $\otherplayer$.
We thus need to make sure that such vertices are excluded from consideration.
This can be achieved by considering the set of \emph{unsafe} vertices $V_{\otherplayer} \cap I$ as potential vertices that can be used by the other player to escape.
We define the safe $\player$-attractor as the least fixed point of the \emph{safe} control predecessor.
The latter is defined as follows:
\begin{equation*}
  \spre{\player}{\Game}{U} = (V_{\player} \cap \pre{\Game}{U}) \cup (V_{\otherplayer} \setminus (\pre{\Game}{V \setminus U} \cup \sinks{\Game} \cup I))
\end{equation*}
\begin{lemma}\label{lemma:safe_predecessor}
  Let $\Game$ be an incomplete parity game.
  For all vertex sets $X \subseteq \safe{\player}{\Game}$ it holds that $\cpre{\player}{\Game \cap \safe{\player}{\Game}}{X} = \spre{\player}{\Game}{X}$.
\end{lemma}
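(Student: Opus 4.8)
The plan is to prove the set equality by double inclusion, unfolding both definitions and carefully tracking what the restriction to $\safe{\player}{\Game}$ does to the underlying edge relation and sink set. Throughout, write $S = \safe{\player}{\Game} = V \setminus \attr{\otherplayer}{G}{V_{\otherplayer} \cap I}$, and let $\Game \cap S$ have underlying game $G \cap S$ with edge relation $E_S = (S \times S) \cap E$. Fix $X \subseteq S$. The key structural facts I would establish first are: (i) $V_{\otherplayer} \cap I \subseteq V \setminus S$, i.e. every unsafe vertex of $\otherplayer$ lies outside $S$, since $V_{\otherplayer} \cap I$ is contained in its own $\otherplayer$-attractor; (ii) $S$ is $\otherplayer$-closed in $G$ in the following sense — for every $v \in S \cap V_{\otherplayer}$ that is not a sink, \emph{every} successor of $v$ in $G$ lies in $S$ (otherwise $v$ would be in $\attr{\otherplayer}{G}{V_{\otherplayer} \cap I}$), and for every $v \in S \cap V_{\player}$, at least one successor lies in $S$ unless $v$ is a sink (again by the attractor fixpoint, a $\player$-vertex all of whose successors leave $S$ would be attracted by $\otherplayer$). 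Fact (ii) is what lets me relate $\pre{\Game \cap S}{\cdot}$ and $\pre{\Game}{\cdot}$ on the relevant sets.

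For the $\player$-owned part, I would show $V_{\player} \cap S \cap \pre{\Game \cap S}{X} = V_{\player} \cap \pre{\Game}{X}$ when $X \subseteq S$. The inclusion $\subseteq$ is immediate since $E_S \subseteq E$. For $\supseteq$: if $v \in V_{\player} \cap \pre{\Game}{X}$ then $v$ has a successor in $X \subseteq S$, and if additionally $v \in S$ then that edge is an edge of $G \cap S$, so $v \in \pre{\Game \cap S}{X}$; and a $\player$-vertex in $\spre{\player}{\Game}{X}$ via the first disjunct is automatically in $S$ because… actually here I need to be a little careful — the definition of $\spre{\player}{\Game}{X}$ does not explicitly intersect with $S$. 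This is the subtlety: the claim holds because $X \subseteq S$ forces $\pre{\Game}{X} \subseteq \pre{G}{S}$, and one checks $V_\player \cap \pre{G}{S} \subseteq S$ is \emph{not} generally true — but $V_\player \cap \pre{\Game}{X}$ for $X \subseteq S$ need not lie in $S$ either. I expect the resolution is that $\cpre{\player}{\Game \cap S}{X}$ is by definition a subset of $S$, so I must show the right-hand side $\spre{\player}{\Game}{X}$ is also a subset of $S$ under the hypothesis $X \subseteq S$, and that is exactly where facts (i) and (ii) are used: a $\player$-vertex with a successor in $S$ is in $S$ (its successor witnesses it is not attracted), and an $\otherplayer$-vertex surviving the subtraction of $I$ and of $\pre{\Game}{V \setminus X}$ has all successors in $X \subseteq S$, hence is not attracted, hence is in $S$.

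For the $\otherplayer$-owned part I would show $V_{\otherplayer} \cap S \cap (S \setminus (\pre{\Game \cap S}{S \setminus X} \cup \sinks{\Game \cap S})) = V_{\otherplayer} \setminus (\pre{\Game}{V \setminus X} \cup \sinks{\Game} \cup I)$. Using fact (ii), for $v \in V_{\otherplayer} \cap S$ that is not a $G$-sink, all $G$-successors of $v$ are in $S$, so $v$ is a $(G\cap S)$-sink iff it is a $G$-sink (it is not), and the set of $G$-successors of $v$ equals the set of $(G \cap S)$-successors of $v$; hence $v \in \pre{\Game \cap S}{S \setminus X}$ iff $v \in \pre{\Game}{S \setminus X}$ iff $v \in \pre{\Game}{V \setminus X}$ (the last step because any successor of $v$ is in $S$, so a successor in $V \setminus X$ is the same as a successor in $S \setminus X$). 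Combining with $V_{\otherplayer} \cap I \subseteq V \setminus S$ from fact (i), subtracting $I$ on the right is automatically subsumed by restricting to $S$ on the left, and the two expressions coincide. The main obstacle, and the step I would spend the most care on, is fact (ii) — precisely the claim that $S$ behaves like an $\otherplayer$-trap with respect to $G$-edges — because it is the hinge that turns "predecessors in the restricted game" into "predecessors in the full game" and makes the sink sets agree; everything else is bookkeeping with the definitions of $\textsf{cpre}$, $\textsf{spre}$, and $\sqsubseteq$-independent set manipulation.
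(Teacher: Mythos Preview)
Your proposal is correct and follows essentially the same route as the paper: double inclusion, case split on ownership, with the crux being that $\spre{\player}{\Game}{X} \subseteq \safe{\player}{\Game}$ whenever $X \subseteq \safe{\player}{\Game}$. The only difference is organisational: you extract the ``$\otherplayer$-trap'' property of $S$ (your fact (ii)) as a preliminary observation, whereas the paper establishes the same facts inline via small contradiction arguments inside each case.

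One point to clean up: midway through you assert that $V_\player \cap \pre{G}{S} \subseteq S$ is ``not generally true'', then a few lines later (correctly) argue exactly that a $\player$-vertex with a successor in $S$ must lie in $S$, since such a vertex belongs to $\pre{G}{V \setminus A}$ and hence escapes the $\otherplayer$-attractor $A$. The latter is right and is precisely what the paper uses; delete the earlier hedging so the argument reads cleanly.
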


\begin{proof}  
  Let $\Game = (G,I)$, with $G = (V, \edgerel, \priof, (V_\peven, V_\podd))$, be an incomplete parity game and let $\Game' = \Game \cap \safe{\player}{\Game} = (G', I \cap \safe{\player}{\Game})$, with $G' = (V', \edgerel', \priof', (V'_\peven, V'_\podd))$.
  Pick an arbitrary vertex set $X \subseteq \safe{\player}{\Game}$.

  \begin{itemize}
  \item \emph{ad} $\cpre{\player}{\Game'}{X} \subseteq \spre{\player}{\Game}{X}$.
   Let $v \in \cpre{\player}{\Game'}{X} \subseteq \safe{\player}{\Game}$.
   We distinguish two cases: $v \in V'_\player$ and $v \in V'_{\otherplayer}$.
   \begin{itemize}

   \item Case $v \in V'_\player$.
   Then, by definition of the control predecessor, also $v \in V'_\player \cap \pre{\Game'}{X}$.
   Since $V'_\player \subseteq V_\player$, it suffices to show that $v \in \pre{\Game}{X}$.
   But this follows instantly since $\Game'$ is a substructure of $\Game$.
   Hence, $v \in V_\player \cap \pre{\Game'}{X} \subseteq \spre{\player}{\Game}{X}$.

   \item Case $v \in V'_{\otherplayer}$.
   Since $v \in \cpre{\player}{\Game'}{X}$, we then also have $v \in V'_{\otherplayer} \setminus (\pre{\Game'}{V' \setminus X} \cup \sinks{\Game'})$, so $v \notin \pre{\Game'}{V' \setminus X} \cup \sinks{\Game'}$.

   Suppose $v \in I$.
   Since $v \in V'_{\otherplayer} \subseteq V_{\otherplayer}$, also $v \in V_{\otherplayer} \cap I$, and therefore $v \in \attr{\otherplayer}{\Game}{V_{\otherplayer} \cap I}$.
   But this contradicts $v \in \safe{\player}{\Game} = V \setminus \attr{\otherplayer}{\Game}{V_{\otherplayer} \cap I}$.
   Hence, $v \notin I$.
   Next, suppose that $v \in \sinks{\Game}$.
   Since $v \notin \sinks{\Game'}$, vertex $v$ was removed from $\Game'$, and, hence, $v \notin \safe{\player}{\Game}$.
   Contradiction, so $v \notin \sinks{\Game}$.
   Finally, suppose that $v \in \pre{\Game}{V \setminus X}$.
   Then there must be a vertex $w \in vE$ for which $w \in V \setminus X$ and $w \not\in V' \setminus X$.
   However, then $w \notin \safe{\player}{\Game}$ and also $v \notin \safe{\player}{\Game}$ since $v \in V_{\otherplayer}$.
   So, $v \notin \pre{\Game}{V \setminus X}$.
   Hence, we can conclude that $v \in V_{\otherplayer}\setminus (\pre{\Game}{V \setminus X} \cup \sinks{\Game} \cup I)$.

   \end{itemize}
   We may thus conclude that $v \in \spre{\player}{\Game}{X}$.

  \item \emph{ad} $\cpre{\player}{\Game'}{X} \supseteq \spre{\player}{\Game}{X}$.
   Let $v \in \spre{\player}{\Game}{X}$.
   We again use a case distinction to show that $v \in \cpre{\player}{\Game'}{X}$. 
   \begin{itemize}
   \item Case $v \in V_\player$.
    Then also $v \in \pre{\Game}{X}$, so for some $w \in vE$, we have $w \in X \subseteq \safe{\player}{\Game}$.
    But then also $v \in \safe{\player}{\Game}$, since $\player$ can move to a vertex in $X$, preventing plays passing through $v$ from reaching $V_{\otherplayer} \cap I$.
    Hence, $v \in V_\player \cap \safe{\player}{\Game} = V'_\player$.
    Since $X \subseteq \safe{\player}{\Game}$, we also have $v \in \pre{\Game'}{X}$.
    Hence, $v \in V'_\player \cap \pre{\Game'}{X}$.

   \item Case $v \in V_{\otherplayer}$.
    Then $v \notin \pre{\Game}{V\setminus X} \cup \sinks{\Game} \cup I$.
    Assume that $v \in \pre{\Game'}{V'\setminus X}$.
    Since $\Game'$ is a subgraph of $\Game$ and $V' \setminus X \subseteq V \setminus X$, we then have $v \in \pre{\Game}{V \setminus X}$.
    Contradiction.
    Next, assume that $v \in \sinks{\Game'}$.
    But this contradicts $v \notin \sinks{\Game}$, since $\sinks{\Game'} \subseteq \sinks{\Game}$.
    Finally, we know that $v \notin \pre{\Game}{V \setminus X}$.
    This means that all successors of $v$ must be in $X$, \ie, we find that $v E \subseteq X \subseteq \safe{\player}{\Game}$.
    But then, since $v \notin \sinks{\Game} \cup I$, we have $v \notin \attr{\otherplayer}{\Game}{V_{\otherplayer} \cap I}$, so $v \in \safe{\player}{\Game}$.
    Hence, $v \in V'_\player$.
   \end{itemize}
   We may therefore conclude that $v \in \cpre{\player}{\Game'}{X}$.\qed
  \end{itemize}
\end{proof}

The safe $\player$-attractor to $U$, denoted $\sattr{\player}{\Game}{U}$, is the set of vertices from which player $\player$ can force to \emph{safely} reach $U$ in $\Game$:
\begin{equation*}
\sattr{\player}{\Game}{U} = \mu Z.(U \cup \spre{\player}{\Game}{Z})
\end{equation*}
\begin{lemma}
Let $\Game$ be an incomplete parity game, and $X \subseteq \safe{\player}{\Game}$.
Then $\attr{\player}{\Game \cap \safe{\player}{\Game}}{X} = \sattr{\player}{\Game}{X}$.
\end{lemma}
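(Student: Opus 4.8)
The plan is to exploit that both $\attr{\player}{\Game \cap \safe{\player}{\Game}}{X}$ and $\sattr{\player}{\Game}{X}$ are defined as least fixed points of monotone operators on a powerset lattice, and that by Lemma~\ref{lemma:safe_predecessor} the two underlying operators agree on every subset of the safe set. Write $S = \safe{\player}{\Game}$ and $\Game' = \Game \cap S$, and recall that the vertex set of $\Game'$ is exactly $S$, so that $\attr{\player}{\Game'}{X}$ ranges over $2^S$ and is the least fixed point of $F(Z) = X \cup \cpre{\player}{\Game'}{Z}$ on $2^S$, while $\sattr{\player}{\Game}{X}$ is the least fixed point of $H(Z) = X \cup \spre{\player}{\Game}{Z}$ on $2^V$; both $F$ and $H$ are monotone, since $\textsf{cpre}$ and $\textsf{spre}$ are.

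The core of the argument is a simultaneous induction on the Kleene approximants. Put $Z_0 = Y_0 = \emptyset$, $Z_{i+1} = F(Z_i)$ and $Y_{i+1} = H(Y_i)$, and I would prove by induction on $i$ that $Z_i = Y_i$ and $Z_i \subseteq S$. The base case is trivial. For the step, assume $Z_i = Y_i \subseteq S$. Since $Z_i \subseteq S$, Lemma~\ref{lemma:safe_predecessor} yields $\spre{\player}{\Game}{Z_i} = \cpre{\player}{\Game'}{Z_i}$, hence $Y_{i+1} = X \cup \spre{\player}{\Game}{Z_i} = X \cup \cpre{\player}{\Game'}{Z_i} = Z_{i+1}$. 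Moreover $Z_{i+1} \subseteq S$: $X \subseteq S$ by hypothesis, and $\cpre{\player}{\Game'}{Z_i}$ is by definition a subset of the vertex set of $\Game'$, which is $S$. As $V$ is finite, both iterations stabilise after finitely many steps, so $\attr{\player}{\Game'}{X} = \bigcup_i Z_i = \bigcup_i Y_i = \sattr{\player}{\Game}{X}$.

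The one point that needs care — and where the hypothesis $X \subseteq \safe{\player}{\Game}$ is genuinely used — is maintaining the invariant $Z_i \subseteq S$ throughout the induction, because Lemma~\ref{lemma:safe_predecessor} only equates $\textsf{cpre}$ and $\textsf{spre}$ on subsets of the safe set; without the invariant the safe attractor could a priori acquire a vertex outside $S$ and the two operators would then no longer coincide. I expect this to be the only real obstacle, and it is handled exactly as above.

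For completeness I would note an alternative, more abstract route that avoids explicit approximants: first show $\sattr{\player}{\Game}{X} \subseteq S$ by verifying that $S$ is a pre-fixed point of $H$ (using $X \subseteq S$ and that $\spre{\player}{\Game}{S} \subseteq S$, which follows from $S$ being the complement of the $\otherplayer$-attractor $\attr{\otherplayer}{\Game}{V_{\otherplayer} \cap I}$), so that $\textrm{lfp}(H) \subseteq S$; then, since $F$ and $H$ coincide on the sublattice $2^S$ by Lemma~\ref{lemma:safe_predecessor}, their least fixed points are equal. I would present the approximant version in the paper, as it is the most self-contained.
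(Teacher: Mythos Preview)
Your proposal is correct and follows essentially the same route as the paper: an induction on the Kleene approximants of the two least fixed points, maintaining the invariant that the approximants lie inside $\safe{\player}{\Game}$ so that Lemma~\ref{lemma:safe_predecessor} applies at each step. Your write-up is in fact slightly more explicit than the paper's (you spell out why the invariant is needed and why finiteness yields termination), and the alternative pre-fixed-point argument you sketch is a nice bonus but not in the paper.
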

\begin{proof}
  We show by means of an induction that the fixed point approximants $A_i$ of $\attr{\player}{\Game \cap \safe{\player}{\Game}}{X}$ are equal to the approximants $B_i$ of $\sattr{\player}{\Game}{X}$ and that $A_i \subseteq \safe{\player}{\Game}$.
  Initially, $A_0 = B_0 = \emptyset$, which are equal and $\emptyset \subseteq \safe{\player}{\Game}$.
  
  Inductive step, assume that $A_i = B_i$ and $A_i \subseteq \safe{\player}{\Game}$.
  We can show that $X \cup \cpre{\player}{G \cap \safe{\player}{\Game}}{A_i} = X \cup \spre{\player}{\Game}{B_i}$.
  First, using Lemma~\ref{lemma:safe_predecessor} we can conclude that $\cpre{\player}{\Game \cap \safe{\player}{\Game}}{A_i} = \spre{\player}{\Game}{A_i}$.
  From $A_i = B_i$ it follows that $X \cup \cpre{\player}{G \cap \safe{\player}{\Game}}{A_i} = X \cup \spre{\player}{\Game}{B_i}$.
  Finally, $\cpre{\player}{G \cap \safe{\player}{\Game}}{A_i} \subseteq \safe{\player}{\Game}$ by definition.
  \qed
\end{proof}

In particular, we can conclude the following:
\begin{corollary}\label{col:safe_attractor}
  Let $\Game$ be an incomplete parity game, and $X \subseteq \safe{\player}{\Game}$ be an $\player$-dominion.
  Then $\sattr{\player}{\Game}{X}$ is an $\player$-dominion for all $\Game'$ satisfying $\Game \sqsubseteq \Game'$.
\end{corollary}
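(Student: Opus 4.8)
The plan is to chain three facts already available in the excerpt. First I would show that $X$, being an $\player$-dominion of $\Game$ that happens to be contained in $\safe{\player}{\Game}$, is in fact an $\player$-dominion of the subgame $\Game \cap \safe{\player}{\Game}$. Then I would invoke the standard observation (noted in Section~\ref{section:preliminaries}) that the $\player$-attractor to an $\player$-dominion is again an $\player$-dominion, applied to the parity game underlying $\Game \cap \safe{\player}{\Game}$, so that $\attr{\player}{\Game \cap \safe{\player}{\Game}}{X}$ is an $\player$-dominion of $\Game \cap \safe{\player}{\Game}$; by the lemma immediately preceding this corollary this set equals $\sattr{\player}{\Game}{X}$. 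Finally, since $\sattr{\player}{\Game}{X} \subseteq \safe{\player}{\Game}$ is an $\player$-dominion in $\Game \cap \safe{\player}{\Game}$, Lemma~\ref{lemma:dominion_preserved} lifts it to an $\player$-dominion of every $\Game'$ with $\Game \sqsubseteq \Game'$, which is exactly the claim.

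For the first step I would argue as follows. Let $\sigma$ witness that $X$ is an $\player$-dominion of $\Game$: it is closed on $X$ and winning for $\player$ from every vertex of $X$. Closedness forces every play consistent with $\sigma$ from $X$ to remain inside $X \subseteq \safe{\player}{\Game}$; in particular every successor of an $\otherplayer$-owned vertex of $X$ lies in $X$, and for an $\player$-owned vertex of $X$ the move prescribed by $\sigma$ lands in $X$. Hence none of the edges traversed by $\sigma$-consistent plays starting in $X$ is discarded when we pass to $\Game \cap \safe{\player}{\Game}$, so (the restriction of) $\sigma$ to paths through $X$ is still a well-defined strategy there, still closed on $X$, and still winning because the induced plays — hence their infinitary priorities and terminal sinks — are unchanged. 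For the second step, attractors, strategies and dominions depend only on the underlying graph and not on the set of incomplete vertices, so the textbook argument applies: one combines $\sigma$ on $X$ with the attractor strategy on $\attr{\player}{\Game \cap \safe{\player}{\Game}}{X}\setminus X$, using that every $\otherplayer$-vertex of the attractor has all its successors inside the attractor (by the definition of \cpre{\player}{\cdot}{\cdot} and the fact that such a vertex is not a sink of the subgame). The preceding lemma then gives $\attr{\player}{\Game \cap \safe{\player}{\Game}}{X} = \sattr{\player}{\Game}{X}$, and Lemma~\ref{lemma:dominion_preserved} finishes the argument.

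I do not expect a genuine obstacle here; the corollary is essentially a composition of results proved just above. The only point that needs a careful (if routine) justification is the first step — that restricting to the $\player$-safe subgame removes no edge relevant to a closed, winning $\player$-strategy — and this follows directly from closedness of the dominion strategy together with $X \subseteq \safe{\player}{\Game}$. Everything else is bookkeeping plus citations of Lemma~\ref{lemma:dominion_preserved}, the preceding attractor-equivalence lemma, and the standard ``attractor of a dominion is a dominion'' fact.
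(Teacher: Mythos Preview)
Your proposal is correct and follows precisely the route the paper has in mind: the corollary is stated without proof because it is meant to be read as the composition of the attractor-equivalence lemma just above, the standard ``$\player$-attractor of an $\player$-dominion is an $\player$-dominion'' fact from Section~\ref{section:preliminaries}, and Lemma~\ref{lemma:dominion_preserved}. Your extra care in Step~1 (showing that an $\player$-dominion $X \subseteq \safe{\player}{\Game}$ in $\Game$ is already an $\player$-dominion of the subgame) is a detail the paper leaves implicit, and your justification via closedness is the right one.
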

One application of the above corollary is the following: since on-the-fly solving is typically performed repeatedly, previously found dominions can be expanded by computing the safe $\player$-attractor towards these already solved vertices.
Another corollary is the following, which states that complete sinks can be safely attracted towards. 
\begin{corollary}
  Let $\Game = (G,I)$ be an incomplete parity game and let $\Game'$ be such that $\Game \sqsubseteq \Game'$.
  Then $\sattr{\player}{\Game}{\psinks{\otherplayer}{\Game}\setminus I}$ is an $\player$-dominion in $\Game'$.
\end{corollary}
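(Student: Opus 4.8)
The plan is to deduce the statement directly from Corollary~\ref{col:safe_attractor}. That corollary says that $\sattr{\player}{\Game}{X}$ is an $\player$-dominion in every $\Game'$ with $\Game \sqsubseteq \Game'$, provided $X \subseteq \safe{\player}{\Game}$ and $X$ is an $\player$-dominion. So I would instantiate it with $X := \psinks{\otherplayer}{\Game} \setminus I$ and verify the two hypotheses: (i) $X$ is an $\player$-dominion in $\Game$, and (ii) $X \subseteq \safe{\player}{\Game}$.

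For (i): every $v \in X$ lies in $V_{\otherplayer} \cap \sinks{\Game}$, so $vE = \emptyset$ and the unique play starting in $v$ is the one-vertex play $v$ itself. This play is finite and ends in a sink owned by $\otherplayer$, hence is won by $\player$ by the winning condition for finite plays. Since $X$ consists solely of sinks, any strategy (in particular the nowhere-defined one) is trivially closed on $X$, and all resulting plays are winning for $\player$; thus $X$ is an $\player$-dominion. For (ii): recall $\safe{\player}{\Game} = V \setminus \attr{\otherplayer}{G}{V_{\otherplayer} \cap I}$, so I must show that no $v \in X$ belongs to $\attr{\otherplayer}{G}{V_{\otherplayer}\cap I}$. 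Fix $v \in X$; then $v \in V_{\otherplayer}$, $v \notin I$, and $vE = \emptyset$. Since $v \notin I$ we have $v \notin V_{\otherplayer}\cap I$, so $v$ is not in the base of the attractor. Moreover, a straightforward induction on the attractor approximants $Z_0 = \emptyset$, $Z_{k+1} = (V_{\otherplayer}\cap I) \cup \cpre{\otherplayer}{G}{Z_k}$ shows $v \notin Z_k$ for all $k$: for $v$ to enter $Z_{k+1}$ via $\cpre{\otherplayer}{G}{Z_k}$ we would need, since $v \in V_{\otherplayer}$, that $v \in V_{\otherplayer} \cap \pre{G}{Z_k}$, i.e.\ that $v$ has a successor in $Z_k$, contradicting $vE = \emptyset$. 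Hence $v \notin \attr{\otherplayer}{G}{V_{\otherplayer}\cap I}$, so $v \in \safe{\player}{\Game}$, establishing (ii).

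With (i) and (ii) in hand, Corollary~\ref{col:safe_attractor} applies verbatim and yields that $\sattr{\player}{\Game}{\psinks{\otherplayer}{\Game}\setminus I}$ is an $\player$-dominion in every $\Game' \sqsupseteq \Game$, which is the claim. I do not anticipate any genuine obstacle: the argument is essentially an instantiation of an already-proved corollary. The only step needing a little care is (ii), where one must unfold the definition of the control predecessor and use the combination ``complete'' ($v \notin I$) and ``sink'' ($vE = \emptyset$) to rule out $v$ ever being attracted by $\otherplayer$ towards its unsafe vertices — a small but essential check.
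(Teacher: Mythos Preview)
Your proposal is correct and is exactly the intended argument: the paper states this result as an immediate corollary of Corollary~\ref{col:safe_attractor} without giving a separate proof, and your instantiation with $X = \psinks{\otherplayer}{\Game}\setminus I$ together with the verification of (i) and (ii) is precisely what is needed to make that derivation explicit.
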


\subsection{Partial Solvers}

In practice, a full-fledged solver, such as Zielonka's algorithm~\cite{Zielonka98} or one of the Priority Promotion variants~\cite{BenerecettiDM18}, may be costly to run often while exploring a parity game.
Instead, cheaper partial solvers may be used that search for a dominion of a particular shape.
We study three such partial solvers in this section, with a particular focus on solvers that lend themselves for parity games that are represented symbolically using, \eg, BDDs~\cite{Bryant92}, MDDs~\cite{Miller93} or LDDs~\cite{DijkP17}.
For the remainder of this section, we fix an arbitrary incomplete parity game $\Game = ( (V,\edgerel,\priof,(V_\peven,V_\podd)),I)$. 

\paragraph{Winning solitaire cycles.}

A simple cycle in  $\Game$ can be represented by a finite sequence of distinct vertices $v_0\,  v_1\, \ldots\, v_n$ satisfying $v_0 \in v_nE$.
Such a cycle is an \emph{$\player$-solitaire cycle} whenever all vertices on that cycle are owned by player $\player$.

Observe that if all vertices on an $\player$-solitaire cycle have a priority that is of the same parity as the owner $\player$, then all vertices on that cycle are won by player $\player$.
Formally, these are thus cycles through vertices in the set $P_\player \cap V_\player$, where $P_\peven = \{ v \in V \setminus \sinks{\Game} \mid \priof(v)\mod 2 = 0\}$ and $P_\podd = \{ v \in V \setminus \sinks{\Game} \mid \priof(v)\mod 2 = 1\}$.
Let $\solcycle{\player}{\Game}$ represent the largest set of $\player$-solitaire winning cycles.
Then $\solcycle{\player}{\Game}= \nu Z. ( P_\player \cap V_\player \cap \pre{\Game}{Z})$.
\begin{proposition}\label{prop:solcycle}
  The set $\solcycle{\player}{\Game}$ is an $\player$-dominion and we have $\solcycle{\player}{\Game} \subseteq \safe{\player}{\Game}$.
\end{proposition}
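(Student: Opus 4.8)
The plan is to prove the two claims separately. First I would show that $\solcycle{\player}{\Game}$ is an $\player$-dominion. By definition, $C := \solcycle{\player}{\Game} = \nu Z.\,(P_\player \cap V_\player \cap \pre{\Game}{Z})$, so $C = P_\player \cap V_\player \cap \pre{\Game}{C}$. This immediately gives three facts: every vertex in $C$ is owned by $\player$, every vertex in $C$ has a priority of the same parity as $\player$ (and is not a sink), and every vertex in $C$ has at least one successor inside $C$. From these I would build the closed winning strategy: for each $v \in C$, pick (arbitrarily) some successor $w \in vE \cap C$ and set $\sigma_\player(v) = w$. Since all of $C$ is owned by $\player$, this $\sigma_\player$ is well-defined on $C$ and, by construction, every play consistent with $\sigma_\player$ starting in $C$ stays in $C$ forever — in particular it is infinite. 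Along any such play, every visited vertex lies in $P_\player$, so every priority seen infinitely often has parity matching $\player$; hence the minimal such priority has the right parity and the play is won by $\player$. Thus $C$ is an $\player$-dominion.

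Second I would show $C \subseteq \safe{\player}{\Game}$, i.e. $C \cap \attr{\otherplayer}{\Game}{V_{\otherplayer} \cap I} = \emptyset$. The cleanest route is by induction on the fixed-point approximants $A_i$ of $\attr{\otherplayer}{\Game}{V_{\otherplayer} \cap I}$, showing $A_i \cap C = \emptyset$ for all $i$. The base case $A_0 = \emptyset$ is trivial. For the inductive step, suppose $A_i \cap C = \emptyset$ and take $v \in \cpre{\otherplayer}{\Game}{A_i}$; I must show $v \notin C$. If $v \in C$ then $v \in V_\player$, so being in $\cpre{\otherplayer}{\Game}{A_i}$ forces $v \in V_\player \setminus (\pre{\Game}{V \setminus A_i} \cup \sinks{\Game})$ — in particular $v \notin \pre{\Game}{V \setminus A_i}$, meaning every successor of $v$ lies in $A_i$. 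But $v \in C$ implies $v$ has a successor in $C$, which by the induction hypothesis is disjoint from $A_i$ — contradiction. Also the base contribution $V_{\otherplayer} \cap I$ is disjoint from $C$ since $C \subseteq V_\player$ and the players' vertex sets are disjoint. Hence $A_i \cap C = \emptyset$ for all $i$, so the limit $\attr{\otherplayer}{\Game}{V_{\otherplayer} \cap I}$ is disjoint from $C$, giving $C \subseteq \safe{\player}{\Game}$.

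I do not anticipate a serious obstacle here; the argument is essentially a routine unfolding of the greatest fixed point together with a monotone-induction argument for the attractor. The one point to state carefully is the definition of $P_\player$ excludes sinks, so vertices of $C$ genuinely have successors and the constructed play is infinite; and that in the control-predecessor definition the "$\sinks{\Game}$" clause is what we never need to invoke because $C$-vertices are non-sinks with a $C$-successor. If one prefers to avoid the explicit approximant induction in the second part, an alternative is: observe $C$ is an $\player$-dominion by part one, hence $\otherplayer$ has no strategy from any $v \in C$ to force play out of $C$; since $V_{\otherplayer} \cap I \subseteq V_{\otherplayer}$ is disjoint from $C \subseteq V_\player$, forcing play to $V_{\otherplayer}\cap I$ would in particular force play out of $C$, which is impossible — so $C \cap \attr{\otherplayer}{\Game}{V_{\otherplayer}\cap I} = \emptyset$. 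I would present the direct approximant induction as the primary argument since it does not rely on the subtlety that an $\player$-closed set cannot be entered against $\player$'s will from inside by the opponent.
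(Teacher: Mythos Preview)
Your proposal is correct and takes essentially the same approach as the paper: both unfold the greatest fixed point to obtain a closed winning strategy for the dominion claim, and both prove $\solcycle{\player}{\Game} \cap \attr{\otherplayer}{\Game}{V_{\otherplayer} \cap I} = \emptyset$ by induction on the attractor approximants, using that $\solcycle{\player}{\Game} \subseteq V_\player$ and that every vertex in $\solcycle{\player}{\Game}$ has a successor in $\solcycle{\player}{\Game}$. The only cosmetic difference is that the paper proves the safety inclusion first and the dominion claim second, whereas you reverse the order.
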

\begin{proof} 
  We first prove that $\solcycle{\player}{\Game} \subseteq \safe{\player}{\Game}$.
  We show, by means of an induction on the fixed point approximants $A_i$ of the attractor, that $\solcycle{\player}{\Game} \cap  \attr{\otherplayer}{\Game}{V_{\otherplayer} \cap I} = \emptyset$.
  The base case follows immediately, as $\solcycle{\player}{\Game} \cap A_0 = \solcycle{\player}{\Game} \cap \emptyset = \emptyset$.
  For the induction, we assume that $\solcycle{\player}{\Game} \cap A_i = \emptyset$; we show that also $\solcycle{\player}{\Game} \cap ((V_{\otherplayer} \cap I) \cup \cpre{\otherplayer}{\Game}{A_i}) = \emptyset$.
  First, observe that $\solcycle{\player}{\Game} \subseteq V_\player$; hence, it suffices to prove that $\solcycle{\player}{\Game} \cap (V_\player \setminus (\pre{\Game}{V \setminus A_i} \cup \sinks{\Game}) =\emptyset$.
  But this follows immediately from the fact that for every vertex $v \in \solcycle{\player}{\Game}$, we have $v \in P_\player \cap V_\player \cap \pre{\Game}{\solcycle{\player}{\Game}}$; more specifically, we have $vE \cap \solcycle{\player}{\Game} \not= \emptyset$ for all $v \in \solcycle{\player}{\Game}$.
  
  The fact that $\solcycle{\player}{\Game}$ is an $\player$-dominion follows from the fact that for every vertex $v \in \solcycle{\player}{\Game}$, there is some $w \in vE \cap \solcycle{\player}{\Game}$.
  This means that player $\player$ must have a strategy that is closed on $\solcycle{\player}{\Game}$.
  Since all vertices in $\solcycle{\player}{\Game}$ are of the priority that is beneficial to $\player$, this closed strategy is also winning for $\player$.\qed
\end{proof}
Observe that winning solitaire cycles can be computed without first computing the $\player$-safe set.
Parity games that stand to profit from detecting winning solitaire cycles are those originating from verifying safety properties.

\paragraph{Winning forced cycles.}
In general, a cycle in $\safe{\player}{\Game}$, through vertices in $P_\peven$ can contain vertices of both players, providing player $\podd$ an opportunity to break the cycle if that is beneficial to her.
Nevertheless, if breaking a cycle always inadvertently leads to another cycle through $P_\peven$, then we may conclude that all vertices on these cycles are won by player $\peven$.
We call these cycles \emph{winning forced cycles} for player $\peven$.
A dual argument applies to cycles through $P_\podd$.
Let $\allcycle{\player}{\Game}$ represent the largest set of vertices that are on winning forced cycles for player $\player$.
More formally, we define $\allcycle{\player}{\Game} = \nu Z. ( P_\player \cap \safe{\player}{\Game} \cap \cpre{\player}{\Game}{Z})$.
\begin{lemma}\label{lemma:allcycle}
  The set $\allcycle{\player}{\Game}$ is an $\player$-dominion and we have $\allcycle{\player}{\Game} \subseteq \safe{\player}{\Game}$.
\end{lemma}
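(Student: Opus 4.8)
The plan is to follow the pattern of the proof of Proposition~\ref{prop:solcycle}, noting that here the inclusion in the safe set comes essentially for free, while the dominion argument must additionally account for the opponent's vertices on the cycles. Throughout, write $X = \allcycle{\player}{\Game}$ and recall that, being a fixed point of the monotone map $Z \mapsto P_\player \cap \safe{\player}{\Game} \cap \cpre{\player}{\Game}{Z}$ on $(2^V,\subseteq)$, it satisfies the equation $X = P_\player \cap \safe{\player}{\Game} \cap \cpre{\player}{\Game}{X}$.

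For $\allcycle{\player}{\Game} \subseteq \safe{\player}{\Game}$ there is nothing more to do than reading off this fixed-point equation: $X = P_\player \cap \safe{\player}{\Game} \cap \cpre{\player}{\Game}{X} \subseteq \safe{\player}{\Game}$. Unlike in Proposition~\ref{prop:solcycle}, no induction over attractor approximants is needed, since membership of $\safe{\player}{\Game}$ is already imposed by the body of the fixed point.

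For the dominion claim I would exhibit an explicit witnessing strategy $\sigma_\player$ and verify closedness and winning. Since $X \subseteq \cpre{\player}{\Game}{X}$ and the vertex sets are partitioned, every $v \in X \cap V_\player$ lies in $\pre{\Game}{X}$, so we may fix a successor $w(v) \in vE \cap X$; let $\sigma_\player$ move to $w(v)$ on any path ending in such a $v$, and be defined arbitrarily elsewhere. To see $\sigma_\player$ is closed on $X$, take $v \in X$: if $v \in V_\player$ the token moves to $w(v) \in X$ by construction; if $v \in V_{\otherplayer}$, then from $v \in \cpre{\player}{\Game}{X}$ and $v \in V_{\otherplayer}$ we obtain $v \notin \pre{\Game}{V \setminus X} \cup \sinks{\Game}$, so $v$ has at least one successor and all of its successors lie in $X$. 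Hence every play consistent with $\sigma_\player$ and starting in $X$ stays in $X$; moreover $X \subseteq P_\player \subseteq V \setminus \sinks{\Game}$, so such a play is infinite and every priority it ever visits has the parity of $\player$. Consequently the least priority occurring infinitely often has the parity of $\player$, i.e.\ the play is won by $\player$, so $\sigma_\player$ is closed on and winning from $X$, witnessing that $X$ is an $\player$-dominion.

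I do not anticipate a genuine obstacle. The one step that truly differs from the solitaire case, and hence deserves care, is the sub-argument that an $\otherplayer$-owned vertex of $X$ is a non-sink all of whose successors lie in $X$: this is exactly the content of the second disjunct of $\cpre{\player}{\Game}{X}$ and is precisely what guarantees that $\otherplayer$ cannot break a forced cycle. Everything else is a routine unfolding of the definitions of $\nu$, $\cpre$, and the parity winning condition.
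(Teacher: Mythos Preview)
Your proof is correct and follows essentially the same approach as the paper: both read off the inclusion $\allcycle{\player}{\Game} \subseteq \safe{\player}{\Game}$ directly from the fixed-point equation, and both establish the dominion claim by a case split on vertex ownership, using the two disjuncts of $\cpre{\player}{\Game}{X}$ to show that $\player$-vertices have a successor in $X$ while $\otherplayer$-vertices are non-sinks with all successors in $X$, and then concluding from $X \subseteq P_\player$ that the resulting closed strategy is winning. Your version is slightly more explicit in constructing the strategy and in noting that plays are infinite, but the argument is the same.
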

\begin{proof}
  The fact that $\allcycle{\player}{\Game} \subseteq \safe{\player}{\Game}$ follows immediately from the fact that for all $v \in \allcycle{\player}{\Game}$, we have $v \in P_\player \cap \safe{\player}{\Game} \cap \cpre{\player}{\Game}{\allcycle{\player}{\Game}}$.
  
  We next show that $\allcycle{\player}{\Game}$ is an $\player$-dominion.
  Pick an arbitrary vertex $v \in \allcycle{\player}{\Game}$.
  If $v \in V_\player$, then $vE \cap \allcycle{\player}{\Game} \not= \emptyset$, so player $\player$ has a strategy to move to another vertex in $\allcycle{\player}{\Game}$.
  In case $v \in V_{\otherplayer}$, then $vE \subseteq \allcycle{\player}{\Game}$, so any play passing through $v$ is guaranteed to next visit a vertex in $\allcycle{\player}{\Game}$.
  Hence, player $\player$ has a closed strategy on $\allcycle{\player}{\Game}$.
  Since $\allcycle{\player}{\Game} \subseteq P_\player$, such a closed strategy must be winning for $\player$. \qed
\end{proof}

A possible downside of the above construction is that it again requires to first compute $\safe{\player}{\Game}$, which, in particular cases, may incur an additional overhead.
Instead, we can compute the same set using the safe control predecessor.
We define $\sallcycle{\player}{\Game} = \nu Z. ( P_\player \cap \spre{\player}{\Game}{Z})$.
\begin{proposition}\label{prop:forced_cycles}
	We have $\allcycle{\player}{\Game} = \sallcycle{\player}{\Game}$.
\end{proposition}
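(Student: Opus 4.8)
The plan is to prove $\allcycle{\player}{\Game} = \sallcycle{\player}{\Game}$ by establishing the two inclusions separately, in each case showing that the left-hand side is in fact a fixed point of the operator whose greatest fixed point defines the right-hand side. Write $f(Z) = P_\player \cap \safe{\player}{\Game} \cap \cpre{\player}{\Game}{Z}$ and $g(Z) = P_\player \cap \spre{\player}{\Game}{Z}$, so that $\allcycle{\player}{\Game} = \nu f$ and $\sallcycle{\player}{\Game} = \nu g$; both operators are monotone on $(2^V,\subseteq)$, so these greatest fixed points exist. The key link between $f$ and $g$ is the auxiliary identity
\[
P_\player \cap \safe{\player}{\Game} \cap \cpre{\player}{\Game}{X} \;=\; P_\player \cap \spre{\player}{\Game}{X} \qquad \text{for all } X \subseteq \safe{\player}{\Game},
\]
i.e.\ $f$ and $g$ agree on subsets of $\safe{\player}{\Game}$; this identity additionally yields $\spre{\player}{\Game}{X} \subseteq \safe{\player}{\Game}$ whenever $X \subseteq \safe{\player}{\Game}$.

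To obtain the identity I would combine Lemma~\ref{lemma:safe_predecessor}, which gives $\cpre{\player}{\Game \cap \safe{\player}{\Game}}{X} = \spre{\player}{\Game}{X}$, with the observation that $\safe{\player}{\Game} \cap \cpre{\player}{\Game}{X} = \cpre{\player}{\Game \cap \safe{\player}{\Game}}{X}$ for $X \subseteq \safe{\player}{\Game}$; intersecting with $P_\player$ then gives the claim. The latter observation follows from a short case split on the owner of a vertex $v \in \safe{\player}{\Game}$: if $v \in V_{\otherplayer}$, then all of its $\Game$-successors already lie in $\safe{\player}{\Game}$ (otherwise $v$ would itself be attracted by $\otherplayer$ to $V_{\otherplayer} \cap I$), so restricting the game to $\safe{\player}{\Game}$ changes neither its successor set nor its sink status; and if $v \in V_\player$, then any witnessing successor in $X \subseteq \safe{\player}{\Game}$ survives the restriction, so $v$ remains a predecessor of $X$ in $\Game \cap \safe{\player}{\Game}$. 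Alternatively the identity can be proved outright by a similarly short case analysis on the owner of $v$, in the style of the proof of Lemma~\ref{lemma:safe_predecessor}.

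The remaining ingredient is the inclusion $\sallcycle{\player}{\Game} \subseteq \safe{\player}{\Game}$, which I would prove exactly as the analogous claims in the proofs of Proposition~\ref{prop:solcycle} and Lemma~\ref{lemma:allcycle}: by induction on the fixed point approximants $A_i$ of $\attr{\otherplayer}{\Game}{V_{\otherplayer} \cap I}$, showing $\sallcycle{\player}{\Game} \cap A_i = \emptyset$. The base case is immediate; for the step, any $v \in \sallcycle{\player}{\Game}$ satisfies $v \in P_\player \cap \spre{\player}{\Game}{\sallcycle{\player}{\Game}}$, and the definition of the safe control predecessor forces that either $v \in V_\player$ and $v$ has a successor inside $\sallcycle{\player}{\Game}$, or $v \in V_{\otherplayer}$, $v \notin I$ and all successors of $v$ lie inside $\sallcycle{\player}{\Game}$; in either situation the induction hypothesis gives $v \notin (V_{\otherplayer} \cap I) \cup \cpre{\otherplayer}{\Game}{A_i} = A_{i+1}$.

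With these two facts the proposition follows cleanly. Since $\allcycle{\player}{\Game} = \nu f \subseteq \safe{\player}{\Game}$ (immediate from the shape of $f$, and also recorded in Lemma~\ref{lemma:allcycle}), applying the identity with $X = \allcycle{\player}{\Game}$ turns the equation $\allcycle{\player}{\Game} = f(\allcycle{\player}{\Game})$ into $\allcycle{\player}{\Game} = g(\allcycle{\player}{\Game})$, so $\allcycle{\player}{\Game}$ is a fixed point of $g$ and hence $\allcycle{\player}{\Game} \subseteq \nu g = \sallcycle{\player}{\Game}$. Symmetrically, $\sallcycle{\player}{\Game} \subseteq \safe{\player}{\Game}$ lets us rewrite $\sallcycle{\player}{\Game} = g(\sallcycle{\player}{\Game})$ as $\sallcycle{\player}{\Game} = f(\sallcycle{\player}{\Game})$, whence $\sallcycle{\player}{\Game} \subseteq \nu f = \allcycle{\player}{\Game}$, and equality follows. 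I expect the main obstacle to be the bookkeeping in the auxiliary identity — in particular making precise why restricting to $\safe{\player}{\Game}$ leaves the (safe) control-predecessor computation on $P_\player$-vertices unchanged — whereas the fixed point manipulation and the attractor induction are routine and closely follow arguments already present in the paper.
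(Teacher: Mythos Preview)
Your proposal is correct and follows essentially the same approach as the paper: both directions hinge on Lemma~\ref{lemma:safe_predecessor} together with the observation that $\safe{\player}{\Game} \cap \cpre{\player}{\Game}{X} = \cpre{\player}{\Game \cap \safe{\player}{\Game}}{X}$ for $X \subseteq \safe{\player}{\Game}$, and both need $\sallcycle{\player}{\Game} \subseteq \safe{\player}{\Game}$. The only cosmetic difference is that for the inclusion $\sallcycle{\player}{\Game} \subseteq \allcycle{\player}{\Game}$ the paper argues directly that any fixed point $Z$ of your $g$ yields a closed $\player$-strategy inside $P_\player \cap \safe{\player}{\Game}$ and hence lies in $\allcycle{\player}{\Game}$, whereas you reuse the identity to exhibit $\sallcycle{\player}{\Game}$ as a fixed point of $f$; your route is slightly more symmetric but the content is the same.
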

\begin{proof}
	Let $\tau(Z) = P_\player \cap \spre{\player}{\Game}{Z}$.
	We use set inclusion to show that $\allcycle{\player}{\Game}$ is indeed a fixed point of $\tau$.
	\begin{itemize}
		\item \emph{ad} $\allcycle{\player}{\Game} \subseteq \tau(\allcycle{\player}{\Game})$.
		Pick a vertex $v \in \allcycle{\player}{\Game}$.
		By definition of $\allcycle{\player}{\Game}$, we have $v \in P_\player \cap \safe{\player}{\Game} \cap \cpre{\player}{\Game}{\allcycle{\player}{\Game}}$.
		Observe that $\safe{\player}{\Game} \cap \cpre{\player}{\Game}{\allcycle{\player}{\Game}} = \safe{\player}{\Game} \cap \cpre{\player}{\Game \cap \safe{\player}{\Game}}{\allcycle{\player}{\Game}}$.
		But then, since $\allcycle{\player}{\Game} \subseteq \safe{\player}{\Game}$, we find, by Lemma~\ref{lemma:safe_predecessor}, that $\cpre{\player}{\Game \cap \safe{\player}{\Game}}{\allcycle{\player}{\Game}} = \spre{\player}{\Game}{\allcycle{\player}{\Game}}$.
		Hence, $v \in P_\player \cap \spre{\player}{\Game}{\allcycle{\player}{\Game}} = \tau(\allcycle{\player}{\Game})$.
		
		\item \emph{ad}  $\allcycle{\player}{\Game} \supseteq \tau(\allcycle{\player}{\Game})$.
		Again pick a vertex $v \in \tau(\allcycle{\player}{\Game})$.
		Then $v \in P_\player \cap \spre{\player}{\Game}{\allcycle{\player}{\Game}}$.
		Since $\allcycle{\player}{\Game} \subseteq \safe{\player}{\Game}$, by Lemma~\ref{lemma:safe_predecessor}, we again have $\spre{\player}{\Game}{\allcycle{\player}{\Game}} = \cpre{\player}{\Game \cap \safe{\player}{\Game}}{\allcycle{\player}{\Game}}$.
		But then it must be the case that $v \in \safe{\player}{\Game}$.
		Moreover, $\cpre{\player}{\Game \cap \safe{\player}{\Game}}{\allcycle{\player}{\Game}} \subseteq \cpre{\player}{\Game}{\allcycle{\player}{\Game}}$.
		So $v \in P_\player \cap \safe{\player}{\Game} \cap \cpre{\player}{\Game}{\allcycle{\player}{\Game}} = \allcycle{\player}{\Game}$.

	\end{itemize}
	We show next that for any $Z = \tau(Z)$, we have $Z \subseteq \allcycle{\player}{\Game}$.
	Let $Z$ be such.
	We first show that for every $v \in Z \cap V_\player$, there is some $w \in vE \cap Z$, and for every $v \in Z \cap V_{\otherplayer}$, we have $v \notin \sinks{\Game}$, $v \notin I$ and $v E \subseteq Z$.
	Pick $v \in Z \cap V_\player$.
	Then $v \in \tau(Z) \cap V_\player = P_\player \cap V_\player \cap \spre{\player}{\Game}{Z} \subseteq \pre{\Game}{Z}$.
	But then $vE \cap Z \not= \emptyset$.
	Next, let $v \in Z \cap V_{\otherplayer}$.
	Then $v \in \tau(Z) \cap V_{\otherplayer} = P_\player \cap V_{\otherplayer} \cap \spre{\player}{\Game}{Z} \subseteq V_{\otherplayer} \setminus (\pre{\Game}{V \setminus Z} \cup \sinks{\Game} \cup I)$.
	So $v \notin \pre{\Game}{V \setminus Z} \cup \sinks{\Game} \cup I$.
	Consequently, $vE \subseteq Z$, $v \notin \sinks{\Game}$ and $v \notin I$.
	
	Since for every $v \in Z \cap V_\player$, we have $vE \cap Z \neq \emptyset$, there must be a strategy for player $\player$ to move to another vertex in $Z$.
	Let $\sigma$ be this strategy.
	Moreover, since for all $v \in Z \cap V_{\otherplayer}$ we have $vE \subseteq Z$, we find that $\sigma$ is closed on $Z$ and since $Z \cap \sinks{\Game} = \emptyset$, strategy $\sigma$ induces forced cycles.
	Moreover, since $Z \subseteq P_\player$, we can conclude that all vertices in $Z$ are on winning forced cycles.
	
	Finally, we must argue that $Z \subseteq \safe{\player}{\Game}$. 
	But this follows from the fact that $Z \cap V_{\otherplayer} \cap I = \emptyset$, and, hence, also $Z \cap \attr{\otherplayer}{\Game}{V_{\otherplayer} \cap I}  = \emptyset$.
Since $Z$ is contained within $P_\player \cap \safe{\player}{\Game}$, we find that $Z \subseteq \allcycle{\player}{\Game}$.
\qed
\end{proof}

\paragraph{Fatal attractors.}
Both solitaire cycles and forced cycles utilise the fact that the parity winning condition becomes trivial if the only priorities that occur on a play are of the parity of a single player.
Fatal attractors~\cite{HuthKP13} were originally conceived to solve parts of a game using algorithms that have an appealing worst-case running time; for a detailed account, we refer to~\cite{HuthKP13}.
While \emph{ibid.} investigates several variants, the main idea behind a fatal attractor is that it identifies cycles in which the priorities are non-decreasing until the dominating priority of the attractor is (re)visited.
We focus on a simplified (and cheaper) variant of the \texttt{psolB} algorithm of~\cite{HuthKP13}, which is based on the concept of a \emph{monotone} attractor, which, in turn, relies on the monotone control predecessor defined below, where $P^{\geq c} = \{ v \in V \mid \priof(v) \geq c \}$:
\begin{align*}
  \Mpre{\player}{\Game}{Z}{U}{c} &= P^{\geq c} \cap \cpre{\player}{\Game}{Z \cup U}
\end{align*}
The monotone attractor for a given priority is then defined as the least fixed point of the monotone control predecessor for that priority, formally $\Mattr{\player}{\Game}{U}{c} = \mu Z. \Mpre{\player}{\Game}{Z}{U}{c}$.
A \emph{fatal} attractor for priority $c$ is then the largest set of vertices closed under the monotone attractor for priority $c$; \ie, $\fattr{\player}{\Game}{c} = \nu Z. (P^{=c} \cap \safe{\player}{\Game} \cap \Mattr{\player}{\Game \cap \safe{\player}{\Game}}{Z}{c})$, where $P^{=c} = P^{\geq c} \setminus P^{\geq c+1}$.
\begin{lemma}[See~\cite{HuthKP13}, Theorem 2]
	For even $c$, we have that $\Mattr{\peven}{\Game \cap \safe{\player}{\Game}}{\fattr{\peven}{\Game}{c}}{c} \subseteq \safe{\peven}{\Game}$ and $\Mattr{\peven}{\Game \cap \safe{\player}{\Game}}{\fattr{\peven}{\Game}{c}}{c}$ is an $\peven$-dominion.
	If $c$ is odd then we have $\Mattr{\podd}{\Game \cap \safe{\player}{\Game}}{\fattr{\podd}{\Game}{c}}{c} \subseteq \safe{\podd}{\Game}$ and $\Mattr{\podd}{\Game \cap \safe{\player}{\Game}}{\fattr{\podd}{\Game}{c}}{c}$ is an $\podd$-dominion.
\end{lemma}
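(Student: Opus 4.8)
The plan is to handle the containment directly from the definitions, to prove that the set at hand is an $\peven$-dominion \emph{inside the safe subgame}, and then to transfer that to $\Game$ by invoking Lemma~\ref{lemma:dominion_preserved}. Throughout I fix an even priority $c$ and take $\player = \peven$ (the odd case being dual), and I abbreviate $\Game' = \Game \cap \safe{\peven}{\Game}$, $F = \fattr{\peven}{\Game}{c}$, and $M = \Mattr{\peven}{\Game'}{F}{c}$, so that what has to be shown is $M \subseteq \safe{\peven}{\Game}$ and that $M$ is an $\peven$-dominion.

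The first part is immediate. By definition $\cpre{\peven}{\Game'}{\cdot}$ only ever returns vertices of $\Game'$, so every approximant of the least fixed point $M = \mu Z.\,\Mpre{\peven}{\Game'}{Z}{F}{c} = \mu Z.\,(P^{\geq c} \cap \cpre{\peven}{\Game'}{Z \cup F})$ is contained in the vertex set of $\Game'$, which is exactly $\safe{\peven}{\Game}$; hence $M \subseteq \safe{\peven}{\Game}$. The same remark settles the analogous clause in the odd case.

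For the dominion claim -- which is the substance of the lemma, essentially the monotone-attractor argument of~\cite{HuthKP13} relocated into $\Game'$ -- I would first read off from the fixed-point definitions that $F \subseteq P^{=c}$ (every vertex of $F$ has priority exactly $c$), that $F \subseteq M$ (unfolding the greatest fixed point once gives $F = P^{=c} \cap \safe{\peven}{\Game} \cap \Mattr{\peven}{\Game'}{F}{c}$, hence $F \subseteq M$), and that $M \subseteq P^{\geq c}$ (from the $P^{\geq c}$ conjunct in $\Mpre{\peven}{\Game'}{\cdot}{F}{c}$). Writing $Z_0 \subseteq Z_1 \subseteq \cdots$ for the approximants of $M$ (with $Z_0 = \emptyset$ and $Z_{i+1} = P^{\geq c} \cap \cpre{\peven}{\Game'}{Z_i \cup F}$) and $r(v) = \min\{i \mid v \in Z_i\} \geq 1$ for the rank of $v \in M$, I would define an $\peven$-strategy $\sigma$ on $M$: for $v \in M \cap V_\peven$ with $r(v) = i$, the fact that $v \in \cpre{\peven}{\Game'}{Z_{i-1} \cup F}$ yields a $\Game'$-successor $w \in Z_{i-1} \cup F$, and we set $\sigma(v) = w$; for $v \in M \cap V_\podd$ with $r(v) = i$, the same membership forces $v$ not to be a sink of $\Game'$ and all its $\Game'$-successors to lie in $Z_{i-1} \cup F \subseteq M$. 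Thus $\sigma$ is closed on $M$ in $\Game'$ and is nowhere stuck, so every $\sigma$-consistent play from $M$ is an infinite $\Game'$-play remaining in $M$. On any such play $v_0 v_1 \cdots$, each step either strictly decreases the rank or lands in $F$ -- exactly the case split in the definition of $\sigma$ applied to the target $Z_{i-1} \cup F$. Since ranks are positive integers, the play cannot stop visiting $F$ (otherwise the ranks would strictly decrease from some point on forever), so $F$, and with it priority $c$, occurs infinitely often; and since $M \subseteq P^{\geq c}$, every priority along the play is $\geq c$. Hence the least priority occurring infinitely often equals the even number $c$, the play is won by $\peven$, and $M$ is an $\peven$-dominion in $\Game' = \Game \cap \safe{\peven}{\Game}$.

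It remains to move from $\Game'$ to $\Game$. Since $\Game \sqsubseteq \Game$ trivially, Lemma~\ref{lemma:dominion_preserved} immediately upgrades ``$\peven$-dominion in $\Game \cap \safe{\peven}{\Game}$'' to ``$\peven$-dominion in $\Game$'' -- and, because $M \subseteq \safe{\peven}{\Game}$, even to ``$\peven$-dominion in every extension of $\Game$''. The odd case is entirely symmetric, replacing $\peven$ by $\podd$ and ``even'' by ``odd''. I expect the only genuinely delicate point to be the rank bookkeeping that underpins the ``strict decrease unless a visit to $F$'' step; the rest is routine manipulation of the fixed-point definitions together with the single appeal to Lemma~\ref{lemma:dominion_preserved}, which conveniently absorbs all reasoning about player $\podd$ escaping through incomplete or newly added vertices.
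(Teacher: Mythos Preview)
The paper does not give its own proof of this lemma: it is stated with the citation ``See~\cite{HuthKP13}, Theorem~2'' and no argument is supplied. So there is nothing in the paper to compare your proof against line by line.

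Your argument is correct and is precisely the fatal-attractor argument of~\cite{HuthKP13} replayed inside the subgame $\Game' = \Game \cap \safe{\peven}{\Game}$: the rank bookkeeping on the approximants $Z_i$ of the monotone attractor, together with $F \subseteq M \subseteq P^{\geq c}$ and $F \subseteq P^{=c}$, yields the ``either the rank drops or you revisit $F$'' dichotomy, from which the winning condition follows. The containment $M \subseteq \safe{\peven}{\Game}$ is indeed immediate since $\cpre{\peven}{\Game'}{\cdot}$ returns only vertices of $\Game'$.

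One small remark on the closing step: invoking Lemma~\ref{lemma:dominion_preserved} with $\Game \sqsubseteq \Game$ is a clean way to lift the dominion from $\Game'$ to $\Game$, but you do not strictly need it. Any $\podd$-vertex $v \in M \subseteq \safe{\peven}{\Game}$ is necessarily complete (since $V_{\podd} \cap I$ lies in the $\podd$-attractor and hence outside $\safe{\peven}{\Game}$) and has all its $\Game$-successors in $\safe{\peven}{\Game}$ (otherwise $v$ itself would be $\podd$-attracted); thus its $\Game$-successors coincide with its $\Game'$-successors, and the very strategy $\sigma$ you built is already closed on $M$ in $\Game$. Either route is fine.
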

Our simplified version of the \texttt{psolB} algorithm, here dubbed \texttt{solB$^-$} computes fatal attractors for all priorities in descending order, accumulating $\peven$ and $\podd$-dominions and extending these dominions using a standard $\peven$ or $\podd$-attractor.
This can be implemented using a simple loop over these priorities.

In line with the previous solvers, we can also modify this solver to employ a safe monotone control predecessor, which uses a construction that is similar in spirit to that of the safe control predecessor.
Formally, we define the safe monotone control predecessor as follows:
\begin{align*}
  \sMpre{\player}{\Game}{Z}{U}{c} &= P^{\geq c} \cap \spre{\player}{\Game}{Z \cup U}
\end{align*}
The corresponding safe monotone $\alpha$-attractor, denoted $\sMattr{\player}{\Game}{U}{c}$, is defined as follows: $\sMattr{\player}{\Game}{U}{c} = \mu Z. \sMpre{\player}{\Game}{Z}{U}{c}$.
We define the \emph{safe} fatal attractor for priority $c$ as the set $\sfattr{\player}{\Game}{c} =\nu Z. (P^{=c} \cap \sMattr{\player}{\Game}{Z}{c})$.
Similar to the safe attractor case using a standard inductive argument we can show the following.
\begin{lemma}\label{lemma:safe_monotone_attractor}
  Let $\Game$ be an incomplete parity game, and $X \subseteq \safe{\player}{\Game}$.
  Then it holds that $\Mattr{\player}{\Game \cap \safe{\player}{\Game}}{X}{c} = \sMattr{\player}{\Game}{X}{c}$ for any player $\player$ and priority $c$.
\end{lemma}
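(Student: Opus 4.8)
The plan is to replay, essentially verbatim, the inductive argument already used to establish $\attr{\player}{\Game \cap \safe{\player}{\Game}}{X} = \sattr{\player}{\Game}{X}$, now with the monotone control predecessor in place of the ordinary one. I fix a player $\player$, a priority $c$, and a set $X \subseteq \safe{\player}{\Game}$, and abbreviate $\Game' = \Game \cap \safe{\player}{\Game}$. Let $A_0 \subseteq A_1 \subseteq \cdots$ be the fixed point approximants of $\Mattr{\player}{\Game'}{X}{c} = \mu Z.\,\Mpre{\player}{\Game'}{Z}{X}{c}$, and let $B_0 \subseteq B_1 \subseteq \cdots$ be those of $\sMattr{\player}{\Game}{X}{c} = \mu Z.\,\sMpre{\player}{\Game}{Z}{X}{c}$. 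Since $V$ is finite and both underlying operators are monotone on the complete lattice $(2^V,\subseteq)$, it is enough to prove by induction on $i$ that $A_i = B_i$ together with the strengthened invariant $A_i \subseteq \safe{\player}{\Game}$; passing to the (finitely reached) limit then yields the lemma.

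First I would dispatch the base case, which is immediate since $A_0 = B_0 = \emptyset \subseteq \safe{\player}{\Game}$. For the inductive step, assume $A_i = B_i$ and $A_i \subseteq \safe{\player}{\Game}$. Then $A_i \cup X \subseteq \safe{\player}{\Game}$, so Lemma~\ref{lemma:safe_predecessor} is applicable to the set $A_i \cup X$ and gives $\cpre{\player}{\Game'}{A_i \cup X} = \spre{\player}{\Game}{A_i \cup X}$. Intersecting both sides with $P^{\geq c}$ --- which is harmless, because restriction to $\safe{\player}{\Game}$ leaves priorities untouched and the control predecessor computed inside $\Game'$ already lands in $\safe{\player}{\Game}$, so the two readings of $P^{\geq c}$ agree on it --- turns the left-hand side into $\Mpre{\player}{\Game'}{A_i}{X}{c} = A_{i+1}$ and the right-hand side into $\sMpre{\player}{\Game}{A_i}{X}{c}$, which equals $\sMpre{\player}{\Game}{B_i}{X}{c} = B_{i+1}$ by the hypothesis $A_i = B_i$. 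The invariant is preserved because $A_{i+1} \subseteq \cpre{\player}{\Game'}{A_i \cup X}$, and the control predecessor of $\Game'$ can only return vertices of $\Game'$, whose vertex set is exactly $\safe{\player}{\Game}$. Taking the union over $i$ then gives $\Mattr{\player}{\Game'}{X}{c} = \sMattr{\player}{\Game}{X}{c}$.

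I do not expect any genuinely hard step here; the only points needing care are bookkeeping ones. The first is invoking Lemma~\ref{lemma:safe_predecessor} correctly: it is stated for an arbitrary subset of $\safe{\player}{\Game}$, but we need it for $A_i \cup X$ rather than for $X$ alone, which is precisely why the auxiliary invariant $A_i \subseteq \safe{\player}{\Game}$ has to be carried through the induction. The second is keeping $P^{\geq c}$, the sinks, and the incomplete set $I$ read consistently when moving between $\Game$ and $\Game \cap \safe{\player}{\Game}$, so that intersecting with $P^{\geq c}$ commutes with the replacement of $\cpre$ by $\spre$. A slightly slicker packaging is to unfold one layer of the definitions at once --- $\Mpre{\player}{\Game'}{Z}{U}{c} = P^{\geq c} \cap \cpre{\player}{\Game'}{Z \cup U}$ and $\sMpre{\player}{\Game}{Z}{U}{c} = P^{\geq c} \cap \spre{\player}{\Game}{Z \cup U}$ --- so that the whole claim reduces, at each approximant, to Lemma~\ref{lemma:safe_predecessor} applied to $Z \cup U$; this is the form of the argument I would actually write up.
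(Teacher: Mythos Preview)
Your proposal is correct and follows essentially the same approach as the paper: induction on the least-fixed-point approximants with the strengthened invariant $A_i \subseteq \safe{\player}{\Game}$, reducing the inductive step to Lemma~\ref{lemma:safe_predecessor} applied to $A_i \cup X$ and then intersecting with $P^{\geq c}$. The paper's write-up is exactly the ``slicker packaging'' you describe in your final paragraph; your extra remarks about reading $P^{\geq c}$ consistently between $\Game$ and $\Game \cap \safe{\player}{\Game}$ are bookkeeping the paper leaves implicit.
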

\begin{proof}
  We show by means of an induction that the fixed point approximants $A_i$ of $\Mattr{\player}{\Game \cap  \safe{\player}{\Game}}{X}{c}$ are equal to the approximants $B_i$ of $\sMattr{\player}{\Game}{X}{c}$ and $A_i \subseteq \safe{\player}{\Game}$.
  Initially, $A_0 = B_0 = \emptyset$ and $\emptyset \subseteq \safe{\player}{\Game}$.
  
  Inductive step, assume that $A_i = B_i$ and $A_i \subseteq \safe{\player}{\Game}$.
  We observe that $\Mpre{\player}{\Game \cap \safe{\player}{\Game}}{A_i}{X}{c} = P^{\geq c} \cap \cpre{\player}{\Game \cap \safe{\player}{\Game}}{X \cup A_i}$, which by Lemma~\ref{lemma:safe_predecessor} is equal to $P^{\geq c} \cap \spre{\player}{\Game}{X \cup A_i}$ since $X \cup A_i \subseteq \safe{\player}{\Game}$.
  Therefore, $\Mpre{\player}{\Game \cap \safe{\player}{\Game}}{A_i}{X}{c} = \sMpre{\player}{\Game}{A_i}{X}{c}$.
  Furthermore, $\Mpre{\player}{\Game \cap \safe{\player}{\Game}}{A_i}{X}{c} \subseteq \safe{\player}{\Game}$.
  \qed
\end{proof}

\begin{lemma}\label{lemma:safe_monotone_attractor_safe}
  Let $\Game$ be an incomplete parity game and $X$ a set of vertices such that $X \subseteq \sMattr{\player}{\Game}{X}{c}$.
  Then $\sMattr{\player}{\Game}{X}{c} \subseteq \safe{\player}{\Game}$ for any player $\player$ and priority $c$.
\end{lemma}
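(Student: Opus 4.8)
The plan is to set $W = \sMattr{\player}{\Game}{X}{c}$ and to prove the equivalent statement $W \cap \attr{\otherplayer}{\Game}{V_{\otherplayer} \cap I} = \emptyset$, which by definition of $\safe{\player}{\Game}$ (and since $W \subseteq V$) is exactly the desired inclusion. The first step is to extract three structural properties of $W$ from its fixed-point characterisation $W = \mu Z.\,(P^{\geq c} \cap \spre{\player}{\Game}{Z \cup X})$. Writing $B_0 = \emptyset$ and $B_{i+1} = P^{\geq c} \cap \spre{\player}{\Game}{B_i \cup X}$ for the Kleene approximants, we have $W = \bigcup_{i} B_i$ and $B_i \subseteq W$ for every $i$, so every $v \in W$ lies in some $B_{i+1}$ and hence in $\spre{\player}{\Game}{B_i \cup X}$. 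Unfolding the definition of $\textsf{spre}$ and using the hypothesis $X \subseteq W$ (so that $B_i \cup X \subseteq W$), this yields: (i) $W \cap V_{\otherplayer} \cap I = \emptyset$; (ii) $vE \subseteq W$ for every $v \in W \cap V_{\otherplayer}$; and (iii) $vE \cap W \neq \emptyset$ for every $v \in W \cap V_{\player}$. Property~(i) in fact follows from the $I$-summand of $\textsf{spre}$ alone; it is (ii) and (iii) for which $X \subseteq W$ is genuinely needed, since a successor known only to lie in $B_i \cup X$ must be shown to lie in $W$.

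With (i)--(iii) in hand I would finish by induction on the fixed-point approximants $A_j$ of $\attr{\otherplayer}{\Game}{V_{\otherplayer} \cap I} = \mu Z.\,((V_{\otherplayer} \cap I) \cup \cpre{\otherplayer}{\Game}{Z})$, showing $W \cap A_j = \emptyset$. The base case $A_0 = \emptyset$ is immediate. For the inductive step, assume $W \cap A_j = \emptyset$; by (i) it suffices to show $W \cap \cpre{\otherplayer}{\Game}{A_j} = \emptyset$. A vertex $v \in W \cap V_{\otherplayer}$ with a successor in $A_j$ would, by (ii), have that successor in $W \cap A_j$, which is impossible; and a vertex $v \in W \cap V_{\player}$ with $vE \subseteq A_j$ would, by (iii), have a successor in $W \cap A_j$, again impossible. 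Hence $W \cap A_{j+1} = \emptyset$, and since $V$ is finite the chain stabilises, giving $W \cap \attr{\otherplayer}{\Game}{V_{\otherplayer} \cap I} = \emptyset$. (Equivalently, (i)--(iii) say that $W$ is an $\otherplayer$-trap disjoint from $V_{\otherplayer} \cap I$, so $W$ is contained in the complement of the $\otherplayer$-attractor to that set by a Knaster--Tarski argument; this mirrors the reasoning used for $\sallcycle{\player}{\Game} \subseteq \safe{\player}{\Game}$ in the proof of Proposition~\ref{prop:forced_cycles}.)

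The one delicate point, and the main thing to get right, is the bookkeeping around the seed set: properties (ii) and (iii) are asserted for \emph{all} of $W$, the vertices of $X$ included, which is legitimate precisely because the hypothesis $X \subseteq \sMattr{\player}{\Game}{X}{c}$ forces every vertex of $X$ to have entered $W$ through an application of $\textsf{spre}$. Without this hypothesis the statement fails --- a seed vertex owned by $\otherplayer$ could be incomplete, or a seed vertex of priority below $c$ could be the only escape of a vertex that $\textsf{spre}$ otherwise attracts --- so it cannot be recovered by reasoning about the (safe) monotone attractor in isolation. Beyond this point the argument is a routine unfolding of the definitions of $\textsf{spre}$, $\textsf{cpre}$, $\textsf{sMcpre}$ and the attractor, in the same inductive style as the proofs of Lemma~\ref{lemma:safe_predecessor} and Proposition~\ref{prop:solcycle}.
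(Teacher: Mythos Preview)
Your proposal is correct and follows essentially the same approach as the paper: both prove $W \cap \attr{\otherplayer}{\Game}{V_{\otherplayer} \cap I} = \emptyset$ by induction on the approximants of the $\otherplayer$-attractor, using the hypothesis $X \subseteq W$ to reduce $W \cup X$ to $W$ and then reading off, from the definition of $\textsf{spre}$, that $W$ is disjoint from $V_{\otherplayer}\cap I$, that $\otherplayer$-vertices in $W$ have all successors in $W$, and that $\player$-vertices in $W$ have some successor in $W$. The only cosmetic difference is that you first isolate these three properties via the Kleene approximants $B_i$ and then invoke them in the induction, whereas the paper unfolds the fixed-point equation $W = P^{\geq c} \cap \spre{\player}{\Game}{W \cup X}$ directly inside the inductive step.
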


\begin{proof}
  Let $\Game = (G,I)$, with $G = (V, \edgerel, \priof, (V_\peven, V_\podd))$, be an incomplete parity game.
	We show, by means of an induction on the fixed point approximants $A_i$ of the attractor, that $\sMattr{\player}{\Game}{X}{c} \cap \attr{\otherplayer}{\Game}{V_{\otherplayer} \cap I} = \emptyset$.	
	The base case follows immediately, as $\sMattr{\player}{\Game}{X}{c} \cap A_0 = \sMattr{\player}{\Game}{X}{c} \cap \emptyset = \emptyset$.
	
  For the induction, we assume that $\sMattr{\player}{\Game}{X}{c} \cap A_i = \emptyset$; we show that also $\sMattr{\player}{\Game}{X}{c} \cap ((V_{\otherplayer} \cap I) \cup \cpre{\otherplayer}{\Game}{A_i}) = \emptyset$.
  First of all, since $\spre{\player}{\Game}{Z} \cap V_{\otherplayer} \cap I = \emptyset$, for any $Z$, it follows that $\sMattr{\player}{\Game}{X}{c} \cap V_{\otherplayer} \cap I = \emptyset$.
  It remains to show that $\sMattr{\player}{\Game}{X}{c} \cap \cpre{\otherplayer}{\Game}{A_i} = \emptyset$, which is equal to showing that $P^{\geq c} \cap \spre{\player}{\Game}{\sMattr{\player}{\Game}{X}{c} \cup X} \cap \cpre{\otherplayer}{\Game}{A_i} = \emptyset$.
  Since $X \subseteq \sMattr{\player}{\Game}{X}{c}$ we have that $\sMattr{\player}{\Game}{X}{c} \cup X = \sMattr{\player}{\Game}{X}{c}$, and therefore it suffices to show that $\spre{\player}{\Game}{\sMattr{\player}{\Game}{X}{c}} \cap \cpre{\otherplayer}{\Game}{A_i} = \emptyset$.
  Consider any vertex $v \in \spre{\player}{\Game}{\sMattr{\player}{\Game}{X}{c}}$. We distinguish two cases. 
  If $v \in V_\player$ then we have that $vE \cap \sMattr{\player}{\Game}{X}{c} \neq \emptyset$ and thus $v \not\in (V_\player \setminus (\pre{\Game}{V \setminus A_i} \cup \sinks{\Game}))$.
  Otherwise, $v \in V_{\otherplayer}$ and we have that $vE \subseteq \sMattr{\player}{\Game}{X}{c}$ and therefore $v \notin V_{\otherplayer} \cap \pre{\Game}{A_i}$.
  
  Thus we conclude that $\sMattr{\player}{\Game}{X}{c} \cap \cpre{\otherplayer}{\Game}{A_i} = \emptyset$. 
  \qed
\end{proof}

We conclude with the correctness of safe fatal attractors.

\begin{proposition}\label{prop:fatal}
  Let $\Game$ be an incomplete parity game.
  We have $\sfattr{\peven}{\Game}{c} = \fattr{\peven}{\Game}{c}$ for even $c$ and for odd $c$ we have $\sfattr{\podd}{\Game}{c} = \fattr{\podd}{\Game}{c}$.
\end{proposition}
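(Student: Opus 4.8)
The plan is to prove the stronger statement that $\fattr{\player}{\Game}{c} = \sfattr{\player}{\Game}{c}$ for \emph{every} player $\player$ and priority $c$; the two cases claimed in the proposition are then instances. The argument mirrors the proof of Proposition~\ref{prop:forced_cycles}: both sets are greatest fixed points, $\fattr{\player}{\Game}{c} = \nu f$ with $f(Z) = P^{=c} \cap \safe{\player}{\Game} \cap \Mattr{\player}{\Game \cap \safe{\player}{\Game}}{Z}{c}$ and $\sfattr{\player}{\Game}{c} = \nu g$ with $g(Z) = P^{=c} \cap \sMattr{\player}{\Game}{Z}{c}$, and I will establish the two inclusions by showing that each of these sets is a fixed point of the operator defining the other. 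A fact used repeatedly is that $\Mattr{\player}{\Game \cap \safe{\player}{\Game}}{X}{c} \subseteq \safe{\player}{\Game}$ for any $X$, because this monotone attractor is computed by iterating the monotone control predecessor inside the subgame $\Game \cap \safe{\player}{\Game}$, whose vertex set is precisely $\safe{\player}{\Game}$; consequently the extra conjunct $\cap\, \safe{\player}{\Game}$ appearing in $f$ is redundant once its third argument is already a monotone attractor taken in that subgame.

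For $\fattr{\player}{\Game}{c} \subseteq \sfattr{\player}{\Game}{c}$ I would show $g(\fattr{\player}{\Game}{c}) = \fattr{\player}{\Game}{c}$. Since $\fattr{\player}{\Game}{c} \subseteq \safe{\player}{\Game}$ by definition, Lemma~\ref{lemma:safe_monotone_attractor} (with $X = \fattr{\player}{\Game}{c}$) turns $\sMattr{\player}{\Game}{\fattr{\player}{\Game}{c}}{c}$ into $\Mattr{\player}{\Game \cap \safe{\player}{\Game}}{\fattr{\player}{\Game}{c}}{c}$; intersecting with $P^{=c}$ and using the redundancy of $\cap\, \safe{\player}{\Game}$ noted above, $g(\fattr{\player}{\Game}{c})$ equals $f(\fattr{\player}{\Game}{c})$, which equals $\fattr{\player}{\Game}{c}$ because $\fattr{\player}{\Game}{c}$ is a fixed point of $f$. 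As $\sfattr{\player}{\Game}{c}$ is the greatest fixed point of $g$, this gives the inclusion.

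For the reverse inclusion I would show $f(\sfattr{\player}{\Game}{c}) = \sfattr{\player}{\Game}{c}$. From $\sfattr{\player}{\Game}{c} = g(\sfattr{\player}{\Game}{c}) = P^{=c} \cap \sMattr{\player}{\Game}{\sfattr{\player}{\Game}{c}}{c}$ we read off $\sfattr{\player}{\Game}{c} \subseteq \sMattr{\player}{\Game}{\sfattr{\player}{\Game}{c}}{c}$, so Lemma~\ref{lemma:safe_monotone_attractor_safe} (with $X = \sfattr{\player}{\Game}{c}$) gives $\sMattr{\player}{\Game}{\sfattr{\player}{\Game}{c}}{c} \subseteq \safe{\player}{\Game}$, whence $\sfattr{\player}{\Game}{c} \subseteq \safe{\player}{\Game}$. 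Now Lemma~\ref{lemma:safe_monotone_attractor} applies again, rewriting $\sMattr{\player}{\Game}{\sfattr{\player}{\Game}{c}}{c}$ as $\Mattr{\player}{\Game \cap \safe{\player}{\Game}}{\sfattr{\player}{\Game}{c}}{c}$, which is contained in $\safe{\player}{\Game}$, so that $f(\sfattr{\player}{\Game}{c}) = P^{=c} \cap \Mattr{\player}{\Game \cap \safe{\player}{\Game}}{\sfattr{\player}{\Game}{c}}{c} = P^{=c} \cap \sMattr{\player}{\Game}{\sfattr{\player}{\Game}{c}}{c} = \sfattr{\player}{\Game}{c}$. Since $\fattr{\player}{\Game}{c}$ is the greatest fixed point of $f$, we obtain $\sfattr{\player}{\Game}{c} \subseteq \fattr{\player}{\Game}{c}$, and hence equality.

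I do not expect a genuine obstacle here: the two lemmas already do the heavy lifting (the inductive equivalence of the safe and restricted monotone attractors, and the safety of a self-supporting safe monotone attractor). The one thing requiring care is the bookkeeping around the two attractor operators — namely pinning down that $\Mattr{\player}{\Game \cap \safe{\player}{\Game}}{X}{c}$ lives inside $\safe{\player}{\Game}$, so that the $\cap\, \safe{\player}{\Game}$ in the definition of the fatal attractor may be dropped, and verifying that the side conditions of Lemmas~\ref{lemma:safe_monotone_attractor} and~\ref{lemma:safe_monotone_attractor_safe} ($X \subseteq \safe{\player}{\Game}$, respectively $X \subseteq \sMattr{\player}{\Game}{X}{c}$) hold at each invocation.
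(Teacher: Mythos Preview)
Your proposal is correct and follows essentially the same route as the paper: both show that each of the two greatest fixed points is a fixed point of the operator defining the other, using Lemma~\ref{lemma:safe_monotone_attractor} to interchange $\sMattr{\player}{\Game}{\cdot}{c}$ and $\Mattr{\player}{\Game\cap\safe{\player}{\Game}}{\cdot}{c}$ and Lemma~\ref{lemma:safe_monotone_attractor_safe} to secure the required containment in $\safe{\player}{\Game}$. The only noteworthy deviation is that for the first inclusion the paper invokes Lemma~\ref{lemma:safe_monotone_attractor_safe} to argue $\sMattr{\player}{\Game}{\fattr{\player}{\Game}{c}}{c}\subseteq\safe{\player}{\Game}$, whereas you observe more directly that $\Mattr{\player}{\Game\cap\safe{\player}{\Game}}{X}{c}$ lives inside the vertex set $\safe{\player}{\Game}$ of the subgame; this is a small simplification but not a different idea.
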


\begin{proof}
The proof proceeds along the lines of that of Proposition~\ref{prop:forced_cycles}.
Let $c$ be an even priority and let $\tau(Z) = P^{=c} \cap \sMattr{\peven}{\Game}{Z}{c}$.
We show that $\fattr{\peven}{\Game}{c}$ is indeed a fixed point of $\tau$.

By definition of $\fattr{\peven}{\Game}{c}$, we know that $\fattr{\peven}{\Game}{c}$ is equal to $P^{=c} \cap \safe{\peven}{\Game} \cap \Mattr{\peven}{\Game \cap \safe{\peven}{\Game}}{\fattr{\peven}{\Game}{c}}{c}$.
Since $\fattr{\peven}{\Game}{c} \subseteq \safe{\peven}{\Game}$ we know that the monotone attractor $\Mattr{\peven}{\Game \cap \safe{\peven}{\Game}}{\fattr{\peven}{\Game}{c}}{c} = \sMattr{\peven}{\Game}{\fattr{\peven}{\Game}{c}}{c}$ by Lemma~\ref{lemma:safe_monotone_attractor}.
Furthermore, $\sMattr{\peven}{\Game}{\fattr{\peven}{\Game}{c}}{c} \subseteq \safe{\peven}{\Game}$ by Lemma~\ref{lemma:safe_monotone_attractor_safe} and thus $\fattr{\peven}{\Game}{c} = P^{=c} \cap \sMattr{\peven}{\Game}{\fattr{\peven}{\Game}{c}}{c} = \tau(\fattr{\peven}{\Game}{c})$.

Next, we show that for any $Z = \tau(Z)$ it holds that $Z \subseteq \fattr{\peven}{\Game}{c}$. Let $Z$ be such.
By Lemma~\ref{lemma:safe_monotone_attractor_safe} we have that $Z \subseteq \safe{\peven}{\Game}$ since $Z \subseteq \sfattr{\peven}{\Game}{c} \subseteq \sMattr{\peven}{\Game}{\sfattr{\peven}{\Game}{c}}{c}$.
Therefore, by Lemma~\ref{lemma:safe_monotone_attractor} it follows that $\sMattr{\peven}{\Game}{Z}{c} = \Mattr{\peven}{\Game \cap \safe{\peven}{\Game}}{Z}{c}$.
Thus we conclude that $Z = P^{=c} \cap \safe{\peven}{\Game} \cap \Mattr{\peven}{\Game \cap \safe{\peven}{\Game}}{Z}{c}$ and therefore $Z \subseteq \fattr{\peven}{\Game}{c}$.

The proof for odd priority $c$ is completely dual.
\qed
\end{proof}

Similar to algorithm \texttt{solB$^-$}, the algorithm \texttt{solB$^-_s$} computes safe fatal attractors for priorities in descending order and collects the safe-$\player$-attractor extended dominions obtained this way.

\section{Experimental Results}\label{section:experiments}

We experimentally evaluate the techniques of Section~\ref{sec:partial_solvers}.
For this, we use games stemming from practical model checking and equivalence checking problems.
Our experiments are run, single-threaded, on an Intel~Xeon~6136~CPU~@~3~GHz PC.
The sources for these experiments can be obtained from the downloadable artefact~\cite{artifact}.

\subsection{Implementation}

We have implemented a symbolic exploration technique for parity games in the mCRL2 toolset~\cite{BunteGKLNVWWW19}.
Our tool exploits techniques such as \emph{read} and \emph{write} dependencies~\cite{KantP14,BlomPW10}, and uses sophisticated exploration strategies such as \emph{chaining} and \emph{saturation}~\cite{CiardoMS06}.
We use MDD-like data structures~\cite{Miller93} called \emph{List Decision Diagrams (LDDs)}, and the corresponding Sylvan implementation~\cite{DijkP17}, to represent parity games symbolically.
Sylvan also offers efficient implementations for set operations and relational operations, such as predecessors, facilitating the implementation of attractor computations, the described (partial) solvers, and a full solver based on Zielonka's recursive algorithm~\cite{Zielonka98}, which remains one of the most competitive algorithms in practice, both explicitly and symbolically~\cite{SanchezWW18,Dijk18}.
For the attractor set computation we have also implemented chaining to determine (multi-)step $\player$-predecessors more efficiently.

For all three on-the-fly solving techniques of Section~\ref{sec:partial_solvers}, we have implemented 1)~a variant that runs the standard (partial) solver on the $\player$-safe subgame and removes the found dominion using the standard attractor (within that subgame), and 2)~a variant that uses (partial) solvers with the safe attractors. 
Moreover, we also conduct experiments using the full solver running on an $\player$-safe subgame.
An important design aspect is to decide how the exploration and the on-the-fly solving should interleave.
For this we have implemented a time based heuristic that keeps track of the time spent on solving and exploration steps.
The time measurements are used to ensure that (approximately) ten percent of total time is spent on solving by delaying the next call to the solver.
We do not terminate the partial solver when it requires more time, and thus it is only approximate.
As a result of this heuristic, cheap solvers will be called more frequently than more expensive (and more powerful) ones, which may cause the latter to explore larger parts of the game graph.

\subsection{Cases}

Table~\ref{table:experiments} provides an overview of the models and a description of the property that is being checked.
The properties are written in the modal $\mu$-calculus with data~\cite{GrooteW05a}.
For the equivalence checking case we have mutated the original model to introduce a defect.
For each property, we indicate the \emph{nesting depth} (ND) and \emph{alternation depth}~\cite{CleavelandKS92} and whether the parity game is \emph{solitaire} (Yes/No).
The nesting depth indicates how many different priorities occur in the resulting game; for our encoding this is at most ND+2 (the additional ones encode constants `\emph{true}' and `\emph{false}').
The alternation depth is an indication of a game's complexity due to alternating priorities.

\begin{table}[h]
\centering
\scriptsize
\caption{Models and formulas.}\label{table:experiments}
\begin{tabular}{ l l c c c c c l }
  Model & Ref. & Prop. & Result & ND & AD & Sol. & Description \\
  \toprule
  SWP & \cite{TanenbaumW11} & 1 & \textsf{false} & 1 & 1 & Y & No error transition \\ 
      &  & 2 & \textsf{false} & 3 & 3 & N & Infinitely often enabled then infinitely often taken
\\ 
  WMS & \cite{RemenskaWVTB13} & 1 & \textsf{false} & 1 & 1 & Y & Job failed to be done \\ 
      &   & 2 & \textsf{false} & 1 & 1 & Y & No zombie jobs \\ 
      &   & 3 & \textsf{true} & 3 & 2 & Y &  A job can become alive again infinitely often \\ 
      &   & 4 & \textsf{false} & 2 & 2 & N & Branching bisimulation with a mutation \\ 
  
  BKE & \cite{BlomGMS05} & 1 & \textsf{true} &  1& 1 & Y & No secret leaked \\ 
  &  & 2 & \textsf{false} & 2 & 1 & N & No deadlock \\ 
  CCP & \cite{PangFHV07}  & 1 & \textsf{false} & 2 & 1 & N & No deadlock  \\
      &                   & 2 & \textsf{false} & 2 & 1 & N & After access there is always accessover possible \\ 
  PDI & n/a & 1 & \textsf{true} & 2 & 1 & N & Controller reaches state before it can connect again \\ 
      & & 2 & \textsf{false} & 2 & 1 & N & Connection impermissible can always happen or we \\ 
      & &   &                & & & & establish a connection \\ 
      & & 3 & \textsf{false} & 3 & 1 & N & When connected move to not ready for connection
and \\ 
      & &   &                & & & & do not establish a connection until it is allowed again \\ 
      & & 4 & \textsf{true}  & 2 & 1 & N & The interlocking moves to the state connection closed \\
      & &   &                & & & & before it is allowed to succesfully establish a connection \\  
  \bottomrule
\end{tabular}
\end{table}

We use MODEL-$i$ to indicate the parity game belonging to model MODEL and property $i$.
Models SWP, BKE and CCP are protocol specifications.
The model PDI is a specification of a EULYNX SCI-LX SySML interface model that is used for a train interlocking system.
Finally, WMS is the specification of a workload management system used at CERN.
Using tools in mCRL2~\cite{BunteGKLNVWWW19}, we have converted each model and property combination into a so-called parameterised Boolean equation systems~\cite{GrooteW05}, a higher-level logic that can be used to represent the underlying parity game.

Parity games SWP-$1$, WMS-$1$, WMS-$2$ and BKE-$1$ encode typical safety properties where some action should not be possible.
In terms of the alternation-free modal mu-calculus with regular expressions, such properties are of the shape $[\textsf{true}^*. \textsf{a}]\textsf{false}$.
These properties are violated exactly when the vertex encoding `\emph{false}' can be reached.
Parity games SWP-$2$, WMS-$3$ and WMS-$4$ are more complex properties with alternating priorities, where WMS-$4$ encodes branching bisimulation using the theory presented in~\cite{ChenPPW07}.
The parity games BKE-$2$ and CCP-$1$ encode a `no deadlock' property given by a formula which states that after every path there is at least one outgoing transition.
Finally, CCP-$2$ and all PDI cases contain formulas with multiple fixed points that yield games with multiple priorities but no (dependent) alternation.
\begin{table}[H]
\centering
\scriptsize
\caption{Experiments with parity games where on-the-fly solving cannot terminate early.
All run times are in seconds. 
The number of vertices is given in millions. 
Memory is given in gigabytes.
Bold-faced numbers indicate the lowest value.}\label{table:full_exploration}
\begin{tabular}{ l l r r r r r}
 Game & Strategy & Vertices ($10^6$) & Explore (s) & Solve (s) & Total (s) & Mem (GB) \\
  \toprule
BKE-1 & full & 40 & 640  & 65 & \textbf{705} & 14 \\
      & solitaire & 40/40 & 629/615  & 153/100 & 782/715 & 15/15 \\
      & cycles & 40/40 & 635/644 & 149/160 & 785/804 & 15/15 \\
      & fatal & 40/40 & 624/625 & 152/164 & 776/789 & 15/15 \\
      & partial & 40 & 651  & 147 & 798 & 15 \\
PDI-1 & full & 114 & 27  & 0.1 & \textbf{28} & 2 \\
      & solitaire & 114/114 & 28/27 & 4/0 & 33/\textbf{28} & 2/2 \\
      & cycles & 114/114 & 29/28 & 7/7 & 36/35 & 2/2 \\
      & fatal & 114/114 & 28/28  & 4/7 & 32/35 & 2/2 \\
      & partial & 114 & 28  & 9 & 37 & 2 \\
PDI-4 & full & 474 & 286  & 0 & \textbf{287} & 2 \\
      & solitaire & 474/474 & 284/281  & 46/14 & 331/295 & 2/2 \\
      & cycles & 474/474 & 284/287  & 92/91 & 376/378 & 2/2 \\
      & fatal & 474/474 & 285/283  & 80/91 & 365/374 & 2/2 \\
      & partial & 474 & 286  & 64 & 350 & 2 \\
  \bottomrule
\end{tabular}
\end{table}

\subsection{Results}

In Tables~\ref{table:full_exploration} and~\ref{table:early_termination} we compare the on-the-fly solving strategies presented in Section~\ref{sec:partial_solvers}.
In the `Strategy' column we indicate the on-the-fly solving strategy that is used.
Here \emph{full} refers to a complete exploration followed by solving with the Zielonka recursive algorithm.
We use \emph{solitaire} to refer to solitaire winning cycle detection, \emph{cycles} for forced winning cycle detection, \emph{fatal} to refer to fatal attractors and finally \emph{partial} for on-the-fly solving with a Zielonka solver on safe regions.
For solvers with a standard variant and a variant that utilises the safe attractors the first number indicates the result of applying the (standard) solver on \emph{safe} vertices, and the second number (following the slash `$/$') indicates the result when using the solver that utilises safe attractors.
\begin{table}[p]
\centering
\scriptsize
\caption{Experiments with parity games in which \emph{at least one} partial solver terminates early.
All run times are in seconds. 
The number of vertices is given in millions.
For solvers with two variants the first number indicates the result of applying the solver on \emph{safe} vertices, and following the slash `$/$' the result when using the solver that uses safe attractors.
Memory is given in gigabytes.
Bold-faced numbers indicate the lowest value.}\label{table:early_termination}
\begin{tabular}{ l l r r r r r}
 Game & Strategy & Vertices ($10^6$) & \hspace{4mm} Explore (s) & \hspace{4mm} Solve (s) & \hspace{5mm} Total (s) & Mem (GB) \\
  \toprule
SWP-1 & full & 13304 & \timeout & n/a & \timeout & \timeout \\
      & solitaire & 15.1/0.4 & 8.5/1.4  & 27.3/0.1 & 35.8/\textbf{1.5} & 2.8/1.5 \\
      & cycles & 25.2/0.9 & 12.3/1.8  & 42.7/1.0 & 55.0/2.8 & 3.2/1.5 \\
      & fatal & 15.1/0.4 & 9.0/1.3  & 29.4/0.4 & 38.4/1.7 & 3.1/1.5 \\
      & partial & 27.1 & 13.1 & 50.4 & 63.5 & 3.6 \\
SWP-2 & full & 1987 & \timeout  & n/a & \timeout & \timeout \\
      & solitaire & 1631/1987 & \timeout/\timeout  & 163/11 & \timeout/\timeout & \timeout/\timeout \\
      & cycles & 1774/1774 & \timeout/\timeout  & 154/91 & \timeout/\timeout & \timeout/\timeout \\
      & fatal & 0.007/0.007 & 0.9/0.9 & 0.4/0.2 & 1.3/\textbf{1.0} & 1.4/1.2 \\
      & partial & 0.007 & 0.9 & 0.4 & 1.3 & 1.4 \\
WMS-1 & full & 270 & 2.8  & 0.4 & 3.3 & 0.2 \\
      & solitaire & 270/240 & 2.8/2.5 & 0.8/0.4 & 3.6/\textbf{2.9} & 0.3/0.2 \\
      & cycles & 270/270 & 2.9/3.2 & 0.8/8.0 & 3.7/11.2 & 0.3/0.5 \\
      & fatal & 270/270 & 2.6/3.2 & 0.8/8.5 & 3.4/11.7 & 0.3/0.5 \\
      & partial & 270 & 2.7 & 0.8 & 3.5 & 0.3 \\
WMS-2 & full & 317 & 3.3  & 0.3 & 3.6 & 0.2 \\
      & solitaire & 7/7 & 0.2/0.2 & 1.0/0.5 & 1.2/\textbf{0.8} & 0.1/0.1 \\
      & cycles & 7/66 & 0.2/0.8 & 1.0/2.7 & 1.2/3.4 & 0.1/0.2 \\
      & fatal & 7/66 & 0.2/0.7 & 1.0/2.9 & 1.3/3.6 & 0.1/0.2 \\
      & partial & 7 & 0.2 & 1.1 & 1.3 & 0.1 \\
WMS-3 & full & 317 & 2.6 & 0.1 & 2.7 & 0.2 \\
      & solitaire & 317/317 & 2.6/2.6 & 0.4/0.3 & 3.1/2.9 & 0.2/0.2 \\
      & cycles & 317/317 & 2.7/2.7  & 0.4/0.6 & 3.1/3.3 & 0.2/0.2 \\
      & fatal & 5/1 & 0.2/0.1  & 0.5/0.1 & 0.7/\textbf{0.2} & 0.1/0.1 \\
      & partial & 5 & 0.2  & 0.3 & 0.5 & 0.1 \\
WMS-4 & full & 366 & \timeout & n/a & \timeout & \timeout \\
      & solitaire & 0.03/0.03 & 38/38  & 0.8/0.1 & 39/38 & 2/2 \\
      & cycles & 0.03/0.03 & 37/37  & 0.8/0.3 & 38/\textbf{37} & 2/2 \\
      & fatal & 0.03/0.03 & 37/37  & 0.8/0.3 & 38/\textbf{37} & 2/2 \\
      & partial & 0.03 & 37  & 0.7 & 38 & 2 \\
BKE-2 & full & 119 & 942 & 36.5 & 979 & 28 \\
      & solitaire & 0.0007/0.0001 & 0.2/0.1 & 0.0/0.0 & \textbf{0.2}/\textbf{0.2} & 0.9/0.9 \\
      & cycles & 0.0007/0.0003 & 0.2/0.2 & 0.0/0.0 & \textbf{0.2}/\textbf{0.2} & 0.9/0.9 \\
      & fatal & 0.0007/0.0003 & 0.2/0.2  & 0.0/0.0 & \textbf{0.2}/\textbf{0.2} & 0.9/0.9 \\
      & partial & 0.0007 & 0.2 & 0.0 & \textbf{0.2} & 0.9 \\
CCP-1 & full & 0.4 & 28  & 4.2 & 32 & 2 \\
      & solitaire & 0.003/0.003 & 1.0/1.0 & 0.1/0.1 & \textbf{1.1}/\textbf{1.1} & 2/2 \\
      & cycles & 0.003/0.003 & 1.0/1.0  & 0.1/0.1 & \textbf{1.1}/\textbf{1.1} & 2/2 \\
      & fatal & 0.006/0.003 & 1.3/1.1  & 0.1/0.1 & 1.4/1.2 & 1.5/1.5 \\
      & partial & 0.003 & 1.0  & 0.1 & \textbf{1.1} & 1.5 \\
CCP-2 & full & 0.9 & 35 & 33 & 68 & 1.7 \\
      & solitaire & 0.02/0.007 & 1.6/1.1 & 0.2/0.0 & 1.8/\textbf{1.1} & 1.5/1.5 \\
      & cycles & 0.02/0.007 & 1.9/1.1  & 0.2/0.1 & 2.1/1.2 & 1.5/1.5 \\
      & fatal & 0.02/0.007 & 1.6/1.2  & 0.2/0.1 & 1.8/1.3 & 1.5/1.5 \\
      & partial & 0.02 & 1.6  & 0.2 & 1.8 & 1.5 \\
PDI-2 & full & 229 & 31  & 12 & 43 & 2 \\
      & solitaire & 229/229 & 33/32 & 34/12 & 67/45 & 2/2 \\
      & cycles & 30/30 & 15/14 & 3/5 & \textbf{17}/19 & 2/2 \\
      & fatal & 30/30 & 15/15  & 3/5 & 18/19 & 2/2 \\
      & partial & 123 & 23  & 29 & 51 & 2 \\
PDI-3 & full & 436 & 228  & 8 & 236 & 2 \\
      & solitaire & 436/436 & 230/228 & 36/32 & 266/260 & 2/2 \\
      & cycles & 78/162 & 65/102 & 19/64 & 84/166 & 2/2 \\
      & fatal & 75/84 & 64/67  & 19/23 & \textbf{83}/90 & 2/2 \\
      & partial & 110 & 82  & 30 & 112 & 2 \\
  \bottomrule
\end{tabular}
\end{table}

The column `Vertices' indicates the number of vertices explored in the game.
In the next columns we indicate the time spent on exploring and solving specifically and the total time in seconds. 
We exclude the initialisation time that is common to all experiments.
Finally, the last column indicates memory used by the tool in gigabytes.
We report the average of 5~runs and have set a timeout (indicated by $\timeout$) at 1200 seconds per run.
Table~\ref{table:full_exploration} contains all benchmarks that require a full exploration of the game graph, providing an indication of the overhead in cases where this is unavoidable; Table~\ref{table:early_termination} contains all benchmarks where \emph{at least one} of the partial solvers allows exploration to terminate early.

For games SWP-$1$, WMS-$1$, WMS-$2$ in Table~\ref{table:early_termination} we find that \emph{solitaire}, and in particular the safe attractor variant, is able to determine the solution the fastest.
Also, for all entries in Table~\ref{table:full_exploration} this is the solver with the least overhead.
Next, we observe that for cases such as WMS-$1$ and PDI-$3$ using the safe attractor variants of the solvers can be detrimental.
Our observation is that first computing safe sets (especially using chaining) can be quick when most vertices are owned by one player and one priority and the computation of the safe attractor, which uses the more difficult safe control predecessor is more involved in such cases.
There are also cases WMS-$3$, WMS-$4$, CCP-$1$ and CCP-$2$ where the safe attractor variants are faster and these cases all have multiple priorities.
In cases where these solvers are slow (for example PDI-$3$) we also observe that more states are explored before termination, because the earlier mentioned time based heuristic results in calling the solver significantly less frequently.

For parity games SWP-2 and WMS-3 only \emph{fatal} and \emph{partial} are able to find a solution early, which shows that more powerful partial solvers can be useful.
From Table~\ref{table:full_exploration} and the cases in which the safe attractor variants perform poorly we learn that the partial solvers can, as expected, cause overhead.
This overhead is in our benchmarks on average 30 percent, but when it terminates early it can be very beneficial, achieving speed-ups of up to several orders of magnitude.

\section{Conclusion}

In this work we have developed the theory to reason about on-the-fly solving of parity games, independent of the strategy that is used to explore games.
We have introduced the notion of \emph{safe} vertices, shown their correctness, proven an optimality result, and we have studied partial solvers and shown that these can be made to run without determining the safe vertices first; which can be useful for on-the-fly solving.
Finally, we have demonstrated the practical purpose of our method and observed that solitaire winning cycle detection with safe attractors is almost always beneficial with minimal overhead, but also that more powerful partial solvers can be useful.

Based on our experiments, one can make an educated guess which partial solver to select in particular cases; we believe that this selection could even be steered by analysing the parameterised Boolean equation system representing the parity game.
It would furthermore be interesting to study (practical) improvements for the safe attractors, and their use in Zielonka's recursive algorithm.

\subsubsection{Acknowledgements}

We would like to thank Jeroen Meijer and Tom van Dijk for their help regarding the Sylvan library when implementing our prototype.
This work was supported by the TOP Grants research programme with project number 612.001.751 (AVVA), which is (partly) financed by the Dutch Research Council (NWO).

\newpage
\bibliographystyle{splncs04}
\bibliography{bibliography}

\begin{thebibliography}{10}
\providecommand{\url}[1]{\texttt{#1}}
\providecommand{\urlprefix}{URL }
\providecommand{\doi}[1]{https://doi.org/#1}

\bibitem{BeerBL98}
Beer, I., Ben{-}David, S., Landver, A.: On-the-fly model checking of {RCTL}
  formulas. In: Hu, A., Vardi, M. (eds.) {CAV}. {LNCS}, vol.~1427, pp.
  184--194. Springer (1998). \doi{10.1007/BFb0028744}

\bibitem{BenerecettiDM18}
Benerecetti, M., Dell'Erba, D., Mogavero, F.: Solving parity games via priority
  promotion. Formal Methods Syst. Des.  \textbf{52}(2),  193--226 (2018).
  \doi{10.1007/s10703-018-0315-1}

\bibitem{BlomGMS05}
Blom, S., Groote, J.F., Mauw, S., Serebrenik, A.: Analysing the bke-security
  protocol with \emph{{\(\mathrm{\mu}\)}}crl. Electron. Notes Theor. Comput.
  Sci.  \textbf{139}(1),  49--90 (2005). \doi{10.1016/j.entcs.2005.09.005}

\bibitem{BlomPW10}
Blom, S., van~de Pol, J., Weber, M.: {LTSmin}: Distributed and symbolic
  reachability. In: Touili, T., Cook, B., Jackson, P.B. (eds.) {CAV}. {LNCS},
  vol.~6174, pp. 354--359. Springer (2010). \doi{10.1007/978-3-642-14295-6_31}

\bibitem{Bryant92}
Bryant, R.E.: Symbolic {Boolean} manipulation with ordered binary-decision
  diagrams. {ACM} Comput. Surv.  \textbf{24}(3),  293--318 (1992).
  \doi{10.1145/136035.136043}

\bibitem{BunteGKLNVWWW19}
Bunte, O., Groote, J.F., Keiren, J.J.A., Laveaux, M., Neele, T., de~Vink, E.P.,
  Wesselink, W., Wijs, A., Willemse, T.A.C.: The {mCRL2} toolset for analysing
  concurrent systems - improvements in expressivity and usability. In: Vojnar,
  T., Zhang, L. (eds.) {TACAS}. {LNCS}, vol. 11428, pp. 21--39. Springer
  (2019). \doi{10.1007/978-3-030-17465-1\_2}

\bibitem{CaludeJKL017}
Calude, C.S., Jain, S., Khoussainov, B., Li, W., Stephan, F.: Deciding parity
  games in quasipolynomial time. In: Hatami, H., McKenzie, P., King, V. (eds.)
  {STOC}. pp. 252--263. {ACM} (2017). \doi{10.1145/3055399.3055409}

\bibitem{ChenPPW07}
Chen, T., Ploeger, B., van~de Pol, J., Willemse, T.A.C.: Equivalence checking
  for infinite systems using parameterized {Boolean} equation systems. In:
  Caires, L., Vasconcelos, V.T. (eds.) {CONCUR}. {LNCS}, vol.~4703, pp.
  120--135. Springer (2007). \doi{10.1007/978-3-540-74407-8_9}

\bibitem{CiardoMS06}
Ciardo, G., Marmorstein, R.M., Siminiceanu, R.: The saturation algorithm for
  symbolic state-space exploration. Int. J. Softw. Tools Technol. Transf.
  \textbf{8}(1),  4--25 (2006). \doi{10.1007/s10009-005-0188-7}

\bibitem{CleavelandKS92}
Cleaveland, R., Klein, M., Steffen, B.: Faster model checking for the modal
  mu-calculus. In: von Bochmann, G., Probst, D.K. (eds.) {CAV}. {LNCS},
  vol.~663, pp. 410--422. Springer (1992). \doi{10.1007/3-540-56496-9\_32}

\bibitem{CranenLW13}
Cranen, S., Luttik, B., Willemse, T.A.C.: Proof graphs for parameterised
  {Boolean} equation systems. In: D'Argenio, P.R., Melgratti, H.C. (eds.)
  {CONCUR}. {LNCS}, vol.~8052, pp. 470--484. Springer (2013).
  \doi{10.1007/978-3-642-40184-8_33}

\bibitem{Dijk18}
van Dijk, T.: Oink: An implementation and evaluation of modern parity game
  solvers. In: Beyer, D., Huisman, M. (eds.) {TACAS}. {LNCS}, vol. 10805, pp.
  291--308. Springer (2018). \doi{10.1007/978-3-319-89960-2_16}

\bibitem{DijkP17}
van Dijk, T., van~de Pol, J.: Sylvan: multi-core framework for decision
  diagrams. Int. J. Softw. Tools Technol. Transf.  \textbf{19}(6),  675--696
  (2017). \doi{10.1007/s10009-016-0433-2}

\bibitem{EirikssonM95}
Eir{\'{\i}}ksson, {\'{A}}.T., McMillan, K.L.: Using formal
  verification/analysis methods on the critical path in system design: {A} case
  study. In: Wolper, P. (ed.) {CAV}. {LNCS}, vol.~939, pp. 367--380. Springer
  (1995). \doi{10.1007/3-540-60045-0_63}

\bibitem{GrooteW05a}
Groote, J.F., Willemse, T.A.C.: Model-checking processes with data. Sci.
  Comput. Program.  \textbf{56}(3),  251--273 (2005).
  \doi{10.1016/j.scico.2004.08.002}

\bibitem{GrooteW05}
Groote, J.F., Willemse, T.A.C.: Parameterised {Boolean} equation systems.
  Theor. Comput. Sci.  \textbf{343}(3),  332--369 (2005).
  \doi{10.1016/j.tcs.2005.06.016}

\bibitem{HuthKP13}
Huth, M., Kuo, J.H., Piterman, N.: Fatal attractors in parity games. In:
  Pfenning, F. (ed.) {FOSSACS}. {LNCS}, vol.~7794, pp. 34--49. Springer (2013).
  \doi{10.1007/978-3-642-37075-5_3}

\bibitem{JurdzinskiL17}
Jurdzi{\'{n}}ski, M., Lazi{\'{c}}, R.: Succinct progress measures for solving
  parity games. In: {LICS}. pp.~1--9. {IEEE} Computer Society (2017).
  \doi{10.1109/LICS.2017.8005092}

\bibitem{KantP12}
Kant, G., van~de Pol, J.: Efficient instantiation of parameterised {Boolean}
  equation systems to parity games. In: Wijs, A., Bosnacki, D., Edelkamp, S.
  (eds.) {GRAPHITE}. {EPTCS}, vol.~99, pp. 50--65 (2012).
  \doi{10.4204/EPTCS.99.7}, \url{https://doi.org/10.4204/EPTCS.99.7}

\bibitem{KantP14}
Kant, G., van~de Pol, J.: Generating and solving symbolic parity games. In:
  Bosnacki, D., Edelkamp, S., Lluch{-}Lafuente, A., Wijs, A. (eds.) {GRAPHITE}.
  {EPTCS}, vol.~159, pp. 2--14 (2014). \doi{10.4204/EPTCS.159.2}

\bibitem{artifact}
Laveaux, M.: Downloadable sources for the case study (2022).
  \doi{10.5281/zenodo.5896966}

\bibitem{MateescuS03}
Mateescu, R., Sighireanu, M.: Efficient on-the-fly model-checking for regular
  alternation-free mu-calculus. Sci. Comput. Program.  \textbf{46}(3),
  255--281 (2003). \doi{10.1016/S0167-6423(02)00094-1}

\bibitem{McNaughton93}
McNaughton, R.: Infinite games played on finite graphs. Ann. Pure Appl. Logic
  \textbf{65}(2),  149--184 (1993). \doi{10.1016/0168-0072(93)90036-D}

\bibitem{Miller93}
Miller, D.M.: Multiple-valued logic design tools. In: {ISMVL}. pp. 2--11.
  {IEEE} Computer Society (1993). \doi{10.1109/ISMVL.1993.289589}

\bibitem{PangFHV07}
Pang, J., Fokkink, W.J., Hofman, R.F.H., Veldema, R.: Model checking a cache
  coherence protocol of a java {DSM} implementation. J. Log. Algebraic Methods
  Program.  \textbf{71}(1),  1--43 (2007). \doi{10.1016/j.jlap.2006.08.007}

\bibitem{RemenskaWVTB13}
Remenska, D., Willemse, T.A.C., Verstoep, K., Templon, J., Bal, H.E.: Using
  model checking to analyze the system behavior of the {LHC} production grid.
  Future Gener. Comput. Syst.  \textbf{29}(8),  2239--2251 (2013).
  \doi{10.1016/j.future.2013.06.004}

\bibitem{SanchezWW18}
Sanchez, L., Wesselink, W., Willemse, T.A.C.: A comparison of {BDD}-based
  parity game solvers. In: Orlandini, A., Zimmermann, M. (eds.) {GandALF}.
  {EPTCS}, vol.~277, pp. 103--117 (2018). \doi{10.4204/EPTCS.277.8}

\bibitem{StasioMV18}
Stasio, A.D., Murano, A., Vardi, M.Y.: Solving parity games: Explicit vs
  symbolic. In: C{\^{a}}mpeanu, C. (ed.) {CIAA}. {LNCS}, vol. 10977, pp.
  159--172. Springer (2018). \doi{10.1007/978-3-319-94812-6_14}

\bibitem{TanenbaumW11}
Tanenbaum, A.S., Wetherall, D.: Computer networks, 5th Edition. Pearson (2011),
  \url{https://www.worldcat.org/oclc/698581231}

\bibitem{Zielonka98}
Zielonka, W.: Infinite games on finitely coloured graphs with applications to
  automata on infinite trees. Theor. Comput. Sci.  \textbf{200}(1-2),  135--183
  (1998). \doi{10.1016/S0304-3975(98)00009-7}

\end{thebibliography}

\newpage
\appendix

\end{document}